\pgfplotsset{compat=newest}
\newtheorem{definition}{Definition}
\newtheorem{lemma}{Lemma}
\newtheorem{theorem}{Theorem}
\newtheorem{corollary}{Corollary}
\newtheorem{assumption}{Assumption}
\begin{document}

\title{Mismatched Guesswork\vspace{-.25in}}

\author{{\bf Salman Salamatian,
        Litian Liu,
        Ahmad Beirami,
        Muriel M\'edard}\\
        Research Laboratory of Electronics, MIT, Cambridge, MA, USA.\\
        Email: \{salmansa, litianl, beirami, medard\}@mit.edu\vspace{0.1in}
        }

\maketitle

\begin{abstract}
We study the problem of mismatched guesswork, where we evaluate the number of symbols $y \in \mathcal{Y}$ which have higher likelihood than $X \sim \mu$ according to a mismatched distribution $\nu$.
We discuss the role of the tilted/exponential families of the source distribution $\mu$ and of the mismatched distribution $\nu$. 
We show that the value of guesswork can be characterized using the tilted family of the mismatched distribution $\nu$, while the probability of guessing is characterized by an exponential family which passes through $\mu$. Using this characterization, we demonstrate that the mismatched guesswork follows a large deviation principle (LDP), where the rate function is described implicitly using information theoretic quantities. We apply these results to one-to-one source coding (without prefix free constraint) to obtain the cost of mismatch in terms of average codeword length. 
We show that the cost of mismatch in one-to-one codes is no larger than that of the prefix-free codes, i.e., $D(\mu\| \nu)$.
Further, the cost of mismatch vanishes if and only if $\nu$ lies on the tilted family of the true distribution $\mu$, which is in stark contrast to the prefix-free codes. These results imply that one-to-one codes are inherently
 more robust to mismatch.  
\end{abstract}

\IEEEpeerreviewmaketitle

\section{Introduction}
Consider a random variable $X$ drawn from the distribution $\mu$ supported on a finite alphabet $\mathcal{X}$. Guesswork, denoted by $G_\mu(X)$, is defined as the number of symbols in $\mathcal{X}$ whose likelihood exceeds $\mu(X)$. 
Guesswork was first studied to derive lower bounds on the computational cost of sequential decoding with the aim of characterizing the cutoff rate~\cite{massey94}. Ar{\i}kan derived bounds on guesswork in terms of the R{\'e}nyi entropy of the distribution $\mu$ and the size of the support $\mathcal{X}$~\cite{Arikan}.

Guesswork quantifies the computational cost of sequential decoding~\cite{massey94, Arikan}, computational security against brute-force attack~\cite{christiansen2015multi, Beirami-ISIT15}, the probability of error in list decoding~\cite{MerhavEr2,MerhavEr4}, and the length of the code in one-to-one source coding~\cite{Beirami-ISIT15,Kosut}. 
Ar{\i}kan and Merhav studied guesswork subject to an allowable distortion~\cite{ArikanMerhav}.
Sundaresan studied guesswork subject to source uncertainty~\cite{Sundaresan}. Hanawal and Sundaresan also studied universal guesswork where the source distribution is unknown~\cite{HanawalSundaresan}. Beirami {\em et al.} further derived an individual sequence version of universal guesswork~\cite{Beirami-ISIT15}. Merhav and Cohen~\cite{asaf-universal-guesswork}  extended the universal guessing setup to randomized guessing.
Christiansen and Duffy proved that guesswork satisfies a large deviation principle~\cite{ChristiansenDuffy}.
Beirami {\em et al.} provided an implicit characterization of the large deviations behavior of guesswork through information theoretic quantities~\cite{Beirami-IT}. The study of guesswork has also been extended to the multi-user setting~\cite{christiansen2015multi}, the distributed setting~\cite{salamatian-isit17}, guessing subject to constraints~\cite{Beirami-ISIT15,Rezaee-guesswork}, and guessing with limited memory~\cite{Salamatian-TIFS}.

In this paper, we study the probabilistic behavior of the so-called mismatched guesswork $G_\nu(X)$, for $X \sim \mu$, and $\nu$ is a mismatched distribution $\nu \neq \mu$. In many of the applications discussed above, mismatch is inevitable in practice, as the source distribution is usually obtained via a sample estimation, which is prone to imprecision. We study the large deviations of mismatched guesswork building on a framework involving tilted distributions which were first introduced in \cite{Beirami-allerton15} and then expanded in \cite{Beirami-IT}. 
We consider the case where a sequence of length $n$ denoted by $x^n$ is drawn i.i.d. from $\mu$, while the mismatched distribution is the product distribution of $\nu$, denoted $\nu^n$.
We prove that, on the one hand, $G_{\nu^n}(x^n)$ is related to the entropy of the ``projection'' of the type of $x^n$ on the tilted family of the mismatched distribution $\nu$. On the other hand, the probability of a sequence $x^n$ is related to the KL-divergence of its type with the true distribution $\mu$. These two observations form the basis of our analysis in this paper.
 
We also explore the application of mismatched guesswork in one-to-one source coding, i.e., source
coding without the prefix constraint. Mismatched guesswork has a direct application in this setting, and is the counterpart of the usual mismatch prefix-free source coding. 
It is well known that, in contrast to the prefix-free source codes, the average length of the one-to-one source codes converge to the entropy rate from below at a rate $-1/2 \log (n) / n$ when the distribution is matched~\cite{szpankowski-2008}.
It was also shown that the cost of universality is smaller in one-to-one codes because of one less degrees of freedom~\cite{Kosut,beirami-itw14,Beirami-IT}.
To complete the characterization, we show that one-to-one source codes are more robust to an incorrect knowledge of the source distribution. Moreover, it is possible to obtain the exact same optimal performance of an optimal one-to-one encoder with a mismatched distribution $\nu$, under the condition that $\nu$ is on the tilted family of the true distribution $\mu$. 

The rest of the paper is organized as follows. In Section~\ref{sec:background}, we introduce the notation and the geometric lemmas used in the paper.  Section~\ref{sec:matched} contains a brief summary of the main results for matched guesswork, namely the LDP and the asymptotic growth rate. In Section~\ref{sec:mismatched}, we provide the main results of this paper and characterize the LDP rate function and growth-rate for the mismatched guesswork. Section~\ref{sec:one_to_one} is an application of the main results to one-to-one source coding. We end the paper with the concluding remarks in Section~\ref{sec:conclusion}.

\section{Background} \label{sec:background}
\subsection{Notation}

Let $\mu$ be a distribution on a finite alphabet $\mathcal{X}$.
We denote random variables by uppercase letters, e.g. $X$, and realizations of these random variables with lowercase letters, e.g. $x$. The simplex of all distributions over the alphabet $\mathcal{X}$ is denoted by $\Delta_{\mathcal{X}}$.
We let $H(\mu)$ be the entropy under $\mu$, i.e. $H(\mu) = \sum_{x \in \mathcal{X}} \mu(x) \log \frac{1}{\mu(x)}$ \footnote{In this paper, all logarithms are measured in nats.}.
 
For two distributions $\mu$ and $\nu$ such that $\nu$ is absolutely continuous with respect to $\mu$, the relative entropy (KL-divergence) is defined as usual $D(\mu \| \nu) = \sum_{x \in \mathcal{X}} \mu(x) \log \frac{\mu(x)}{\nu(x)}$. We let $H(\mu \| \nu) \triangleq H(\mu) + D(\mu \| \nu)$ denote the cross entropy. The R\'enyi entropy of order $\alpha$ is defined as:
\begin{align}
    H_{\alpha}(\mu) = \frac{1}{1 - \alpha} \log \sum_{x \in \mathcal{X}} \mu(x)^\alpha.
\end{align}
For sequences, we use a superscript to denote the length, e.g. $x^n$ is a sequence of length $n$. 
We also let $\mu^n$ denote the $n$-fold product distribution of $\mu$, therefore $X^n \sim \mu^n$ means that the sequence of random variables $X^n$ is generated i.i.d. from $\mu$.
We denote by $\mathbf{q}_{x^n}$ the type or empirical distribution of a sequence, and by $\Gamma_n$ the set of possible types for sequences of length $n$.
For any set $\cal C$ we use notations $\mathrm{int} \mathcal{C}$ and $\mathrm{cl} \mathcal{C}$ to denote the interior and the closure of set $\cal C$ respectively.
Finally, the uniform distribution on $\mathcal{X}$ is denoted by $\mathbf{u}_\mathcal{X}$.

\subsection{Geometry}
We make the following two assumptions on all the probability distributions that we study in this paper:
\begin{assumption}
We say $\mu$ is unambiguous if it satisfies the following:
\begin{enumerate}
    \item $\mu(x) > 0$ for all $x \in \mathcal{X}$.
    \item $\underset{x\in \mathcal{X}}{\mathrm{argmin}}\; \mu(x)$ and $\underset{x\in \mathcal{X}}{\mathrm{argmax}}\; \mu(x)$ are unique.
\end{enumerate}
\end{assumption}

Note that the set of distributions which are not unambiguous forms a set of Lebesgue measure zero in the set of all distributions, which can be seen by the fact that the non-umanbiguous distributions are contained in a finite union of lower dimensional sets. See~\cite{Beirami-IT} for the implications of this assumption.

We are ready to define the tilt.
\begin{definition}[mismatched tilt]
Let $\alpha \in \mathbb{R}$  and $\nu$ be unambiguous. We denote by $T(\mu,\nu,\alpha)$ the mismatched tilted distribution of order $\alpha$ of $\nu$ with respect to $\mu$, defined as
\begin{align}
    [T(\nu, \mu, \alpha)](x_i) \triangleq \frac{\mu(x_i) \cdot \nu(x_i)^\alpha}{\sum_{x \in \mathcal{X}} \mu(x) \cdot \nu(x)^\alpha}. \label{eq:def_tilted_nu_mu}
\end{align}
We further define the the mismatched tilted family of $\nu$ with respect to $\mu$ as
\begin{align}
    \mathcal{T}_{\nu, \mu} &\triangleq \left\{ T(\nu, \mu,\alpha) : \alpha  \in \mathbb{R}\right\}.
\end{align}
\end{definition}
\noindent By taking limits, we define :
\begin{align}
    &[T(\nu, \mu, \infty)](x) = \left\{ \begin{array}{cl}
        1 & \text{if } x = \mathrm{argmax}_{x \in \mathcal{X}} \; \nu(x) ,\\
        0 & \text{otherwise}
    \end{array} \right., \\
    &[T(\nu, \mu, - \infty)](x) = \left\{ \begin{array}{cl}
        1 & \text{if } x = \mathrm{argmin}_{x \in \mathcal{X}} \; \nu(x) ,\\
        0 & \text{otherwise}
    \end{array} \right. ,\\
    &T(\nu, \mu, 0) = \mu.
\end{align}
This definition of mismatched tilt generalizes the tilt defined in~\cite[Definition 13]{Beirami-IT}, and recovers it when  $\mu$ is the uniform distribution.
The tilted family $\mathcal{T}_{\nu, \mathbf{u}_\mathcal{X}}$, is denoted by $\mathcal{T}_{\nu}$, and  $T(\nu,\mathbf{u}_\mathcal{X},\alpha)$ is denoted by $T(\nu,\alpha)$. Further, define $\mathcal{T}_\nu^+ = \left\{ T(\nu,\alpha) : \alpha > 0 \right\}$ as the positive tilted family, and $\mathcal{T}_\nu^- = \left\{ T(\nu,\alpha): \alpha < 0\right\}$ as the negative tilted family. Note that $\mathcal{T}_\nu = \mathcal{T}_\nu^+ \cup \mathcal{T}_\nu^- \cup \mathbf{u}_\mathcal{X}$.

\begin{lemma}[closure of the tilted family under tilt operation]
For any $\alpha>0$, the following holds:
\begin{equation}
    \mathcal{T}_{\nu, \mu} = \mathcal{T}_{T(\nu, \alpha), \mu}.
\end{equation}
\end{lemma}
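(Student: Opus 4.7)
The plan is to prove set equality by showing that for every $\gamma \in \mathbb{R}$ there exists a $\beta \in \mathbb{R}$ (and vice versa) such that $T(T(\nu,\alpha),\mu,\gamma) = T(\nu,\mu,\beta)$. The natural approach is a direct computation: unfold both nested definitions using \eqref{eq:def_tilted_nu_mu}, and exploit the fact that the normalizing constant of $T(\nu,\alpha)$ does not depend on $x$ and therefore cancels between the numerator and the denominator of the outer tilt.

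Concretely, first I would note that for $\alpha>0$, $T(\nu,\alpha)$ is still unambiguous (it has positive mass everywhere and preserves the order of $\nu$), so the outer tilt $T(T(\nu,\alpha),\mu,\gamma)$ is well-defined. Then, writing $Z_\alpha \triangleq \sum_y \nu(y)^\alpha$, I would substitute $[T(\nu,\alpha)](x) = \nu(x)^\alpha / Z_\alpha$ into the definition:
\begin{align}
[T(T(\nu,\alpha),\mu,\gamma)](x)
&= \frac{\mu(x)\,\bigl(\nu(x)^\alpha/Z_\alpha\bigr)^\gamma}
      {\sum_y \mu(y)\,\bigl(\nu(y)^\alpha/Z_\alpha\bigr)^\gamma}
 = \frac{\mu(x)\,\nu(x)^{\alpha\gamma}}{\sum_y \mu(y)\,\nu(y)^{\alpha\gamma}}
 = [T(\nu,\mu,\alpha\gamma)](x),
\end{align}
since the factor $Z_\alpha^{-\gamma}$ is independent of $x$ and cancels.

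Given the identity $T(T(\nu,\alpha),\mu,\gamma) = T(\nu,\mu,\alpha\gamma)$, the set equality follows immediately from the fact that, for $\alpha>0$, the map $\gamma \mapsto \alpha\gamma$ is a bijection of $\mathbb{R}$ onto itself: as $\gamma$ ranges over $\mathbb{R}$, so does $\beta = \alpha\gamma$, so
\begin{equation}
\mathcal{T}_{T(\nu,\alpha),\mu} = \{T(\nu,\mu,\alpha\gamma) : \gamma \in \mathbb{R}\} = \{T(\nu,\mu,\beta) : \beta \in \mathbb{R}\} = \mathcal{T}_{\nu,\mu}.
\end{equation}

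There is no real obstacle here beyond bookkeeping; the only subtlety worth flagging is the hypothesis $\alpha>0$, which is used in two places: to ensure that $T(\nu,\alpha)$ is unambiguous (and hence a legal second argument of the tilt), and to guarantee that $\gamma \mapsto \alpha\gamma$ is an order-preserving bijection of $\mathbb{R}$. For $\alpha<0$ the identity $T(T(\nu,\alpha),\mu,\gamma)=T(\nu,\mu,\alpha\gamma)$ would still hold, but the positive/negative tilted subfamilies would be swapped, which is why the lemma is stated for $\alpha>0$.
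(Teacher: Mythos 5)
Your proof is correct and follows essentially the same route as the paper, which simply asserts that the claim follows from the definition of $T(\nu,\alpha)$ and \eqref{eq:def_tilted_nu_mu}; your explicit computation $T(T(\nu,\alpha),\mu,\gamma)=T(\nu,\mu,\alpha\gamma)$ together with the bijection $\gamma\mapsto\alpha\gamma$ for $\alpha>0$ is exactly the intended argument, just written out in full.
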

\begin{proof}
    The proof follows from the definition of $T(\nu,\alpha)$ and from \eqref{eq:def_tilted_nu_mu}.
\end{proof}
We now define a collection of linear families.
\begin{definition}[linear family] We denote by $\mathcal{L}(\nu, \alpha)$ the linear family of $\nu$ of order $\alpha$, defined as
\begin{align}
\mathcal{L}(\nu, \alpha) \triangleq \{ \gamma \in \Delta_{\mathcal{X}} : H(\gamma \| \nu) = H(T(\nu, \alpha) \| \nu) )\}
\end{align}
\end{definition}
Intuitively, the mismatched tilted family $\mathcal{T}_{\nu,\mu}$ and the tilted family $\mathcal{T}_{\nu}$, correspond to the curves that are \emph{orthogonal} to the linear families $\mathcal{L}(\nu,\alpha)$, and pass through $\mu$ and $\mathbf{u}_\mathcal{X}$, respectively. We refer the interested reader to \cite[Section 3]{csiszar2004information} for an overview of the duality between linear and exponential families, and their applications in statistics, information theory, and large deviations theory.

For a distribution $\mu$, we can also define projections on a tilted family $\mathcal{T}_\nu$ in the following way:
\begin{definition}[projection on a tilted family]\label{def:projection}
We say $\Pi_{\mathcal{T}_\nu}(\mu)$ is the projection of $\mu$ on $\mathcal{T}_\nu$ and define it as
\begin{align}
    \Pi_{\mathcal{T}_\nu}(\mu) &\triangleq \arg_{\gamma \in \mathcal{T}_\nu } \left\{ H(\gamma \| \nu) = H(\mu \| \nu) \right\} . 
     \label{eq:projection}
\end{align}
\end{definition}
\noindent Note that $\Pi_{\mathcal{T}_\nu}(\mu) = T_\nu \cap \mathcal{L}(\nu, \alpha^\star) = T(\nu, \alpha^\star)$ with $\alpha^\star$ selected such that $H(T(\nu,\alpha^\star)\|\nu)=H(\mu \| \nu)$.

\begin{figure}
    \centering
    \includegraphics[scale=.45]{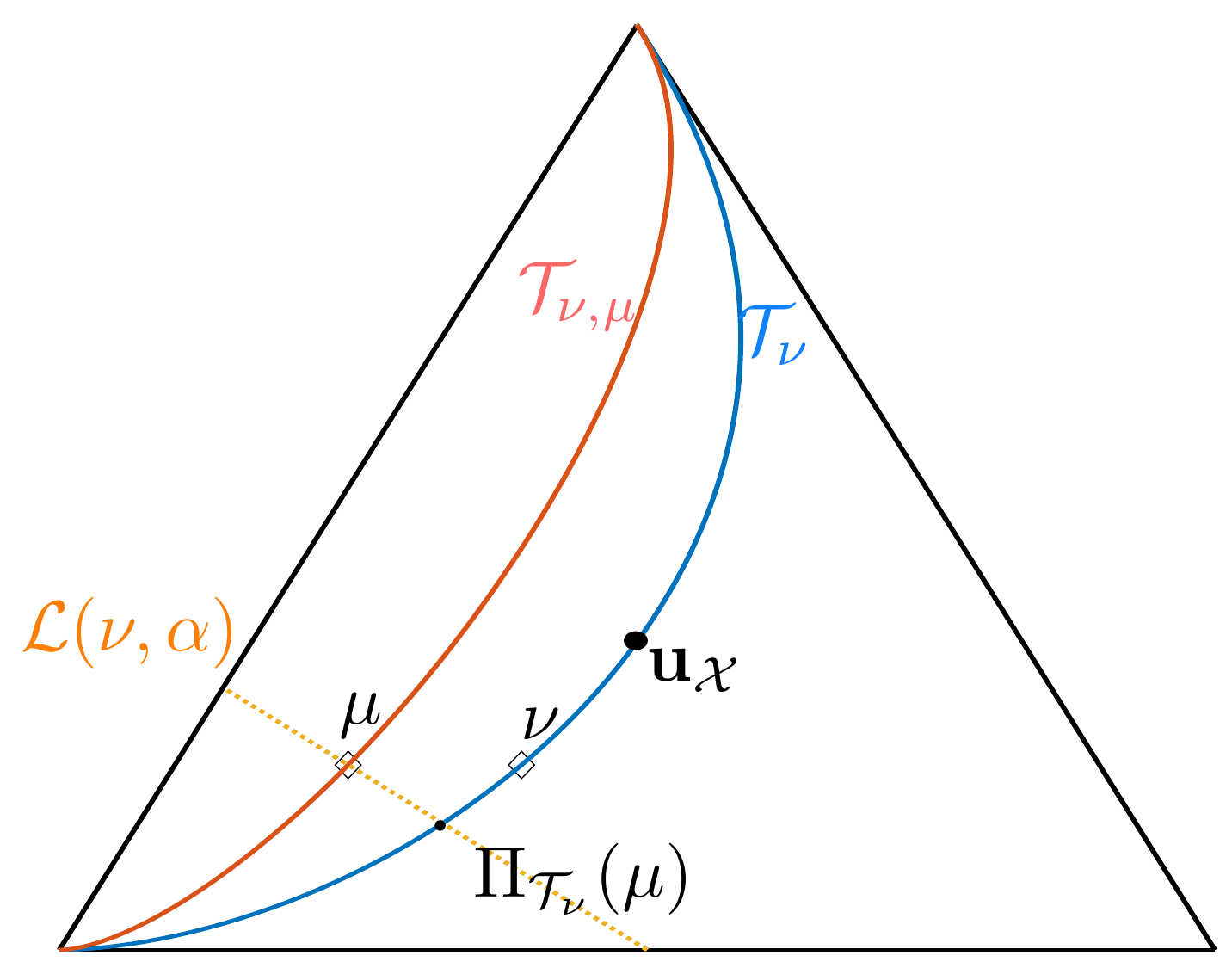}
    \caption{Representation of the 3-dimensional simplex, each point in the triangle represents a distribution over $|\mathcal{X}| =3$. The corners of the triangle correspond to the distribution where all the mass is on a single symbol. The exponential family $\mathcal{T}_\nu$ goes through $\mathbf{u}_\mathcal{X}$ and $\nu$. The exponential family $\mathcal{T}_{\nu,\mu}$ goes through $\mu$. $\mathcal{L}(\nu, \alpha^\star)$ is the linear family of $\nu$ of order $\alpha^\star$ which passes through $\mu$.  The distribution $\Pi_{\mathcal{T}_\nu}(\mu)$ is the projection of $\mu$ onto $\mathcal{T}_\nu$. Of particular interest for lossless coding will be the divergences $D(\mu \| \nu)$ and $D(\mu \| \Pi_{\mathcal{T}_\nu}(\mu))$.}
    \label{fig:main_fig}
\end{figure}

The following lemma guarantees existence and uniqueness of the projection operator.
\begin{lemma}\label{lem:existence_uniqueness}
Let $\mu$ and $\nu$ be umambiguous, then $\Pi_{\mathcal{T}_\nu}(\mu)$ exists and is unique. Further, $\Pi_{\mathcal{T}_\nu}(\mu) = \mu$ iff $\mu \in \mathcal{T}_\nu$.
\end{lemma}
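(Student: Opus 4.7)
The plan is to reduce the defining equation of the projection to a one-dimensional equation in the scalar parameter $\alpha$ of the tilted family $\mathcal{T}_\nu$, and then solve it by a strict-monotonicity plus intermediate-value-theorem argument. Concretely, I would introduce
\[
f(\alpha) \triangleq H\bigl(T(\nu,\alpha)\,\|\,\nu\bigr) = -\sum_{x \in \mathcal{X}} \frac{\nu(x)^\alpha}{Z(\alpha)} \log \nu(x), \qquad Z(\alpha) \triangleq \sum_{y \in \mathcal{X}} \nu(y)^\alpha,
\]
so that producing $\Pi_{\mathcal{T}_\nu}(\mu)$ reduces to solving $f(\alpha) = H(\mu\|\nu)$ for a single real unknown, and setting $\Pi_{\mathcal{T}_\nu}(\mu) = T(\nu,\alpha^\star)$ at the solution.

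The key step is to show that $f$ is strictly monotone. Starting from $\log T(\nu,\alpha)(x) = \alpha\log\nu(x) - \log Z(\alpha)$ and the identity $Z'(\alpha)/Z(\alpha) = \mathbb{E}_{T(\nu,\alpha)}[\log \nu(X)]$, a direct computation should yield
\[
f'(\alpha) = -\mathrm{Var}_{T(\nu,\alpha)}\bigl(\log\nu(X)\bigr).
\]
Unambiguity of $\nu$ forbids $\nu$ from being uniform (otherwise its argmin and argmax would fail to be unique singletons), so $\log\nu$ is nonconstant on $\mathcal{X}$, the variance is strictly positive for every finite $\alpha$, and $f$ is strictly decreasing. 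Taking limits $\alpha \to \pm\infty$, $T(\nu,\alpha)$ concentrates on the unique argmax (respectively argmin) of $\nu$, yielding $f(\alpha) \to -\log\max_x \nu(x)$ and $f(\alpha) \to -\log\min_x \nu(x)$. Because $\mu$ has full support and is not a Dirac mass (again by unambiguity), $H(\mu\|\nu) = -\sum_x \mu(x)\log\nu(x)$ lies strictly in this open interval. Continuity plus strict monotonicity of $f$ then deliver a unique $\alpha^\star\in\mathbb{R}$ with $f(\alpha^\star) = H(\mu\|\nu)$, which gives both existence and uniqueness.

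For the ``iff'' clause, one direction is immediate from the definition: the projection always lies in $\mathcal{T}_\nu$, so $\Pi_{\mathcal{T}_\nu}(\mu)=\mu$ forces $\mu\in\mathcal{T}_\nu$. Conversely, if $\mu = T(\nu,\alpha_0)$ for some $\alpha_0$, then $f(\alpha_0)=H(\mu\|\nu)$ trivially, and uniqueness from the preceding step forces $\alpha^\star = \alpha_0$, so $\Pi_{\mathcal{T}_\nu}(\mu)=\mu$. The main obstacle in the argument is really bookkeeping: making sure the variance in $f'(\alpha)$ is genuinely positive and that $H(\mu\|\nu)$ does not fall on the boundary of the reachable range of $f$; both reduce directly to the unambiguity hypotheses on $\nu$ and $\mu$ respectively, so removing either assumption would break the proof at exactly one of these two points.
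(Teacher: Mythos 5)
Your proposal is correct and follows essentially the same route as the paper: reduce the projection equation to a one-dimensional equation in the tilt parameter, use strict monotonicity and continuity of $\alpha \mapsto H(T(\nu,\alpha)\,\|\,\nu)$ together with the boundary values $-\log\max_x\nu(x)$ and $-\log\min_x\nu(x)$ and the intermediate value theorem, and get the ``iff'' clause directly from the definition plus uniqueness. The only difference is that you derive the strict decrease yourself via the variance identity $f'(\alpha)=-\mathrm{Var}_{T(\nu,\alpha)}(\log\nu(X))$ and justify the strict boundary inequalities from unambiguity, whereas the paper cites these facts from prior work; your version is simply more self-contained.
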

\begin{proof}
Note that for an unambiguous $\nu$, $H(T(\nu,\beta) \| \nu)$ is a strictly decreasing continuous function in $\beta$~\cite{Beirami-IT}. Further, $H(T(\nu, \infty)\| \nu) < H(\mu \| \nu) < H(T(\nu, -\infty)\| \nu)$, thus the projection must exist and is unique, by the intermediate value theorem.
The second part of the claim follows by definition of $\Pi_{\mathcal{T}_\nu}(\mu)$. 
\end{proof}
\noindent The definitions above are summarized in Figure~\ref{fig:main_fig}.
These geometric quantities satisfy various useful properties, which will be of use in the rest of this paper. We will review some of those in the rest of this section. We start with the I-Projection Pythagorean theorem (see for example \cite[Theorem~3.2]{csiszar2004information}).
\begin{lemma}[I-Pythagoerean theorem]\label{thm:pythagore}
Let $\gamma \in \mathcal{T}_\nu$, then
\begin{align}
    D(\mu \| \gamma) = D(\mu \| \Pi_{\mathcal{T}_\nu}(\mu)) + D(\Pi_{\mathcal{T}_\nu}(\mu) \| \gamma).
\end{align}
\end{lemma}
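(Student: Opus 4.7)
The plan is to reduce the Pythagorean identity to the defining cross-entropy condition of the projection by exploiting the special log-linear form of the tilted family.

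First, I would parametrize everything explicitly. Let $\pi := \Pi_{\mathcal{T}_\nu}(\mu) = T(\nu,\alpha^\star)$, with $\alpha^\star$ chosen (via Lemma~\ref{lem:existence_uniqueness}) so that $H(\pi \| \nu) = H(\mu \| \nu)$, and write the given $\gamma \in \mathcal{T}_\nu$ as $\gamma = T(\nu,\beta)$. Using \eqref{eq:def_tilted_nu_mu} with $\mu = \mathbf{u}_\mathcal{X}$, each of $\pi$ and $\gamma$ takes the form $\nu(x)^{\alpha}/Z(\alpha)$ where $Z(\alpha) = \sum_x \nu(x)^\alpha$. Consequently
\begin{align}
\log\frac{\pi(x)}{\gamma(x)} = (\alpha^\star - \beta)\log \nu(x) + \log\frac{Z(\beta)}{Z(\alpha^\star)},
\end{align}
which is affine in $\log\nu(x)$. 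This log-linear structure is the only real property of the tilted family we need.

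Next I would compute $D(\mu\|\gamma) - D(\mu\|\pi)$ and $D(\pi\|\gamma)$ by taking expectations of the displayed identity under $\mu$ and under $\pi$, respectively. Both expressions become
\begin{align}
(\alpha^\star - \beta)\,\mathbb{E}_{\cdot}[\log \nu(X)] + \log\frac{Z(\beta)}{Z(\alpha^\star)},
\end{align}
where the expectation is under $\mu$ in the first case and under $\pi$ in the second. So the Pythagorean identity reduces to the single equality $\mathbb{E}_\mu[\log\nu(X)] = \mathbb{E}_\pi[\log\nu(X)]$, equivalently $H(\mu\|\nu) = H(\pi\|\nu)$.

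The key observation — and the only nontrivial step — is that this equality is exactly what the projection was defined to enforce: by Definition~\ref{def:projection}, $\pi$ is the element of $\mathcal{T}_\nu$ on the linear family $\mathcal{L}(\nu,\alpha^\star)$ passing through $\mu$, so both $\mu$ and $\pi$ share the same cross entropy with $\nu$. Plugging this back yields $D(\mu\|\gamma) - D(\mu\|\pi) = D(\pi\|\gamma)$, which is the claim. The hardest part conceptually is recognizing that the membership-in-the-linear-family condition is precisely the orthogonality needed for Pythagoras; after that, the argument is pure algebra on the log-linear parametrization. No compactness, no calculus of variations, and no appeal to the general Csisz\'ar projection theorem is required beyond Lemma~\ref{lem:existence_uniqueness} for existence.
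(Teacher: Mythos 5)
Your proof is correct, but it takes a different route from the paper: the paper does not prove this lemma at all, it simply invokes the general I-projection Pythagorean theorem of Csisz\'ar (cited as Theorem~3.2 of \cite{csiszar2004information}), whereas you give a self-contained verification. Your computation is sound: writing $\pi = T(\nu,\alpha^\star)$ and $\gamma = T(\nu,\beta)$ with $T(\nu,\alpha)(x) = \nu(x)^\alpha / Z(\alpha)$, the difference $\log\bigl(\pi(x)/\gamma(x)\bigr)$ is affine in $\log\nu(x)$, so $D(\mu\|\gamma) - D(\mu\|\pi)$ and $D(\pi\|\gamma)$ differ only through $\mathbb{E}_\mu[\log\nu(X)]$ versus $\mathbb{E}_\pi[\log\nu(X)]$, and these coincide precisely because $H(\pi\|\nu) = H(\mu\|\nu)$ is the defining property of the projection (Definition~\ref{def:projection}, with existence and finiteness of $\alpha^\star$ from Lemma~\ref{lem:existence_uniqueness}, and all divergences finite since $\nu$ unambiguous gives $\gamma, \pi$ full support for finite tilt parameters). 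What the citation buys the paper is brevity and generality; what your argument buys is transparency: it makes explicit that in this setting the Pythagorean relation is an exact identity driven solely by the log-linear form of $\mathcal{T}_\nu$ together with membership of $\mu$ and $\pi$ in the same linear family $\mathcal{L}(\nu,\alpha^\star)$, without needing the minimization (I-projection) characterization or any compactness/variational machinery. The only caveat worth stating in a polished write-up is that $\gamma$ ranges over $\mathcal{T}_\nu$ with a finite tilt parameter (as in the paper's definition), so that $\gamma(x)>0$ everywhere and the rearrangement of the divergences is legitimate.
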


 The next two lemma characterize properties of the projection in terms of entropy and reletive entropy (KL divergence).
\begin{lemma}[Projection does not decrease entropy]\label{lem:entr_ineq}
Let $\Pi_{\mathcal{T}_\nu}(\mu) \in \mathcal{T}_\nu^+$, then
\begin{align}
    H(\Pi_{\mathcal{T}_\nu}(\mu)) &  = H(\mu \| \Pi_{\mathcal{T}_\nu}(\mu)) \geq H(\mu)
\end{align}
with equality iff $\mu \in \mathcal{T}_\nu$.
\end{lemma}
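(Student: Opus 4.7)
The plan is to reduce both parts of the claim to the defining equality of the projection together with Gibbs' inequality, after expressing $\pi := \Pi_{\mathcal{T}_\nu}(\mu)$ in its tilted form.

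First I would fix $\alpha^\star > 0$ with $\pi = T(\nu,\alpha^\star)$, which exists because the hypothesis places the projection in $\mathcal{T}_\nu^+$ and Lemma~\ref{lem:existence_uniqueness} guarantees uniqueness. From the definition of the (uniform) tilt, $\pi(x) = \nu(x)^{\alpha^\star}/Z$ where $Z := \sum_{y \in \mathcal{X}} \nu(y)^{\alpha^\star}$, so that $-\log \pi(x) = -\alpha^\star \log \nu(x) + \log Z$. Averaging this identity once against $\mu$ and once against $\pi$ produces the two representations
\begin{align}
    H(\mu \| \pi) &= \alpha^\star \, H(\mu \| \nu) + \log Z, \\
    H(\pi) &= \alpha^\star \, H(\pi \| \nu) + \log Z.
\end{align}

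For the equality $H(\pi) = H(\mu \| \pi)$, I would invoke the defining property of the projection in Definition~\ref{def:projection}, namely $H(\pi \| \nu) = H(\mu \| \nu)$; substituting into the two displays above yields the claim immediately. (Note that this step only uses $\pi \in \mathcal{T}_\nu$, not the positivity of the tilt, but the positivity hypothesis is what lets us pick a finite $\alpha^\star$ and is convenient for the later applications.)

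For the inequality $H(\mu \| \pi) \geq H(\mu)$, I would simply write $H(\mu \| \pi) - H(\mu) = D(\mu \| \pi)$ and appeal to Gibbs' inequality, which gives nonnegativity and equality iff $\mu = \pi$. The second part of Lemma~\ref{lem:existence_uniqueness} then identifies $\mu = \pi$ with $\mu \in \mathcal{T}_\nu$, completing the proof. There is no real obstacle here; the only point requiring care is cleanly deriving the two entropy representations from the tilted form of $\pi$, after which the rest is a direct application of earlier results.
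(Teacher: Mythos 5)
Your proof is correct, but it takes a different route from the paper's. The paper proves the inequality by writing $H(\mu) = \log|\mathcal{X}| - D(\mu\|\mathbf{u}_\mathcal{X})$ and invoking the I-Pythagorean theorem (Lemma~\ref{thm:pythagore}) with $\gamma = \mathbf{u}_\mathcal{X} = T(\nu,0) \in \mathcal{T}_\nu$, which splits $D(\mu\|\mathbf{u}_\mathcal{X})$ as $D(\mu\|\Pi_{\mathcal{T}_\nu}(\mu)) + D(\Pi_{\mathcal{T}_\nu}(\mu)\|\mathbf{u}_\mathcal{X})$ and then drops the first (nonnegative) term. You instead bypass the Pythagorean identity entirely: you parameterize $\Pi_{\mathcal{T}_\nu}(\mu) = T(\nu,\alpha^\star)$, use the affine relation $-\log \pi(x) = -\alpha^\star\log\nu(x) + \log Z$ to get $H(\mu\|\pi) = \alpha^\star H(\mu\|\nu) + \log Z$ and $H(\pi) = \alpha^\star H(\pi\|\nu) + \log Z$, and then the defining equation $H(\pi\|\nu) = H(\mu\|\nu)$ of the projection gives the equality $H(\pi) = H(\mu\|\pi)$, after which the inequality is just $D(\mu\|\pi)\ge 0$ with the equality case handled by Lemma~\ref{lem:existence_uniqueness}. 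Both arguments ultimately reduce the gap to $D(\mu\|\Pi_{\mathcal{T}_\nu}(\mu))$; the paper's is shorter given the Pythagorean machinery and fits the geometric picture of the section, while yours is more elementary and self-contained, makes the middle equality $H(\Pi_{\mathcal{T}_\nu}(\mu)) = H(\mu\|\Pi_{\mathcal{T}_\nu}(\mu))$ and the equality condition fully explicit (the paper leaves both somewhat implicit), and correctly observes that only finiteness of the tilt parameter, not its positivity, is needed for that equality.
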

\begin{proof}
We first use the identity $H(\mu) = \log |\mathcal{X}| - D(\mu \| \mathbf{u}_{\mathcal{X}})$. By Theorem~\ref{thm:pythagore}, we have $D(\mu \| \mathbf{u}_\mathcal{X}) = D(\mu \| \Pi_{\mathcal{T}_\nu}(\mu)) + D(\Pi_{\mathcal{T}_\nu}(\mu) \| \mathbf{u}_\mathcal{X})$. Thus, 
\begin{align}
H(\mu) &= \log |\mathcal{X}| - D(\mu \| \Pi_{\mathcal{T}_\nu}(\mu)) - D(\Pi_{\mathcal{T}_\nu}(\mu) \| \mathbf{u}_\mathcal{X}) \\
&\leq \log |\mathcal{X}| - D(\Pi_{\mathcal{T}_\nu}(\mu) \| \mathbf{u}_\mathcal{X}) \\
& = H(\Pi_{\mathcal{T}_\nu}(\mu))
\end{align}
\end{proof}
This yields directly the following lemma.
\begin{lemma}[Projection does not increase relative entropy]\label{cor:div_ineq}
We have
\begin{align}
    D(\Pi_{\mathcal{T}_\nu}(\mu) \| \nu) = D(\mu\|\nu) + H(\mu) - H(\Pi_{\mathcal{T}_\nu}(\mu))\leq D(\mu \| \nu)
\end{align}
with equality iff $\mu \in \mathcal{T}_\nu$ .
\end{lemma}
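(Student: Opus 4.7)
The plan is to observe that this lemma is essentially a one-line bookkeeping corollary of the preceding lemma, obtained by rewriting relative entropy in terms of cross-entropy and entropy.

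First, I would argue that $H(\Pi_{\mathcal{T}_\nu}(\mu) \| \nu) = H(\mu \| \nu)$. This is really just unfolding Definition~\ref{def:projection}: by the remark following the definition, $\Pi_{\mathcal{T}_\nu}(\mu) = T(\nu, \alpha^\star)$, where $\alpha^\star$ is the unique value for which $H(T(\nu, \alpha^\star) \| \nu) = H(\mu \| \nu)$. Equivalently, the projection lies in the linear family $\mathcal{L}(\nu, \alpha^\star)$, which is precisely the set of distributions with this cross-entropy value.

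Second, I would apply the elementary identity $H(\gamma \| \nu) = H(\gamma) + D(\gamma \| \nu)$ (just unwinding the definition of cross-entropy used in Section~\ref{sec:background}) to both $\gamma = \mu$ and $\gamma = \Pi_{\mathcal{T}_\nu}(\mu)$ and subtract. Since the cross-entropy terms agree by the first step, one obtains
\begin{align}
D(\Pi_{\mathcal{T}_\nu}(\mu) \| \nu) - D(\mu \| \nu) = H(\mu) - H(\Pi_{\mathcal{T}_\nu}(\mu)),
\end{align}
which rearranges to the claimed equality.

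Third, the inequality and the equality condition then follow immediately by invoking the previous lemma (Projection does not decrease entropy), which gives $H(\Pi_{\mathcal{T}_\nu}(\mu)) \geq H(\mu)$ with equality iff $\mu \in \mathcal{T}_\nu$. The one mild subtlety — and the only thing I would pause on — is that the preceding lemma is stated under the extra hypothesis $\Pi_{\mathcal{T}_\nu}(\mu) \in \mathcal{T}_\nu^+$; however, inspecting its proof, the Pythagorean decomposition together with nonnegativity of $D(\cdot \| \mathbf{u}_\mathcal{X})$ works without that restriction, so no genuine obstacle arises here. There is essentially no hard step in this proof; it is a direct corollary.
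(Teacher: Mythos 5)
Your proposal is correct and is essentially identical to the paper's proof: both use the defining property $H(\Pi_{\mathcal{T}_\nu}(\mu)\,\|\,\nu) = H(\mu\,\|\,\nu)$, expand cross-entropy as entropy plus relative entropy, and conclude via the preceding lemma (Lemma~\ref{lem:entr_ineq}) together with its equality condition. Your side remark about the hypothesis $\Pi_{\mathcal{T}_\nu}(\mu) \in \mathcal{T}_\nu^+$ is a fair observation and correctly resolved, since the Pythagorean step in Lemma~\ref{lem:entr_ineq} only needs $\mathbf{u}_\mathcal{X} \in \mathcal{T}_\nu$ and nonnegativity of relative entropy.
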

\begin{proof}
By definition of $\Pi_{\mathcal{T}_\nu}$, we have $H(\Pi_{\mathcal{T}_\nu}(\mu) \| \nu) = H(\mu \| \nu)$, or equivalently that
\begin{align}
    H(\Pi_{\mathcal{T}_\nu}(\mu)) + D(\Pi_{\mathcal{T}_\nu}(\mu) \| \nu) = H(\mu) + D(\mu \| \nu)
\end{align}
The proof follows from using Lemma~\ref{lem:entr_ineq}.
\end{proof}

\section{Matched guesswork}\label{sec:matched}

We define the guesswork $G_\mu(x)$ as the position of $x$ in the list of symbols $\mathcal{X}$ ordered from most likely to least likely according to $\mu$, where ties are broken according to lexicographic ordering. More precisely, let $G_\mu : \mathcal{X} \to |\mathcal{X}|$ be the one-to-one function, such that $G_\mu(x) < G_\mu(y) \implies \mu(x) \geq \mu(y)$, and $x$ is ahead of $y$ in lexicographic ordering.\footnote{Note that the restriction on lexicographic ordering is for convenience and any tie-breaking rule could be used instead with no change to the results in this paper.}
We also consider the logarithm of the guesswork $g_\mu(x) = \log G_\mu(x)$.



We start by reviewing the existing results on guessing with the matched distribution $\mu$ (but not in chronological order). Throughout, we let $X^n \sim \mu^n$, and consider the asymptotic behavior of guesswork as $n \to \infty$.  It was shown in~\cite{ChristiansenDuffy} that, under some mild conditions, the logarithm of guesswork satisfies a  large deviation principle (LDP), and the rate function was further given in terms of information theoretic quantities in~\cite[Theorem 5]{Beirami-IT}.

\begin{theorem}[LDP for matched guesswork]\label{thm:main}
For any unambiguous $\mu$, the sequence $\{\frac{1}{n} g_\mu(X^n) \}_{n \in \mathbb{N}^+}$ satisfies a LDP, with rate function $J(t)$ defined implicitly by
\begin{align}
J(t) = D(T(\mu,\alpha(t))\| \mu),
\end{align}
where $\alpha(t) = \arg_{\alpha \geq 0} \{H(T(\mu,\alpha)) = t\}$.
\end{theorem}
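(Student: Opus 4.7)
The plan is to combine the method of types with Sanov's theorem and the contraction principle. First, because $\mu^n(y^n) = \exp\bigl(-n H(\mathbf{q}_{y^n} \| \mu)\bigr)$, ordering sequences by likelihood under $\mu^n$ coincides with ordering by the cross-entropy $H(\cdot \| \mu)$. Partitioning the counting problem $G_\mu(x^n) = \bigl|\{y^n : \mu^n(y^n) \geq \mu^n(x^n)\}\bigr|$ into type classes and using $|T_q| = \exp(n H(q) + o(n))$, a Laplace-principle estimate yields
\begin{align}
    \frac{1}{n}\, g_\mu(x^n) \;=\; \max_{q \in \Delta_{\mathcal{X}}} \Bigl\{\, H(q) \;:\; H(q \| \mu) \leq H(\mathbf{q}_{x^n} \| \mu) \,\Bigr\} + o(1),
\end{align}
where the $o(1)$ is uniform in $x^n$ up to the $O(\log n / n)$ overhead coming from the polynomial number of types.

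Second, I would resolve this variational problem. The level set $\{q : H(q\|\mu) = c\}$ is exactly a linear family $\mathcal{L}(\mu,\alpha)$ with $\alpha$ chosen so that $H(T(\mu,\alpha)\|\mu) = c$. By Lemma~\ref{lem:entr_ineq}, entropy on this linear family is maximized at its intersection with the tilted family $\mathcal{T}_\mu$, namely at $\Pi_{\mathcal{T}_\mu}(q) = T(\mu,\alpha)$. Strict monotonicity of $H(T(\mu,\alpha)\|\mu)$ in $\alpha$ (Lemma~\ref{lem:existence_uniqueness}) forces the constraint to bind, so the outer maximizer is $\Pi_{\mathcal{T}_\mu}(\mathbf{q}_{x^n})$ and
\begin{align}
    \frac{1}{n}\, g_\mu(x^n) \;=\; H\bigl(\Pi_{\mathcal{T}_\mu}(\mathbf{q}_{x^n})\bigr) + o(1).
\end{align}

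Third, I would invoke Sanov's theorem (the empirical type $\mathbf{q}_{X^n}$ satisfies an LDP with rate $D(\cdot \| \mu)$) and transfer it through the continuous map $\Phi : q \mapsto H(\Pi_{\mathcal{T}_\mu}(q))$ using the contraction principle. The preimage $\Phi^{-1}(\{t\})$ is exactly $\mathcal{L}(\mu,\alpha(t))$, and the I-Pythagorean identity (Lemma~\ref{thm:pythagore}) collapses the induced variational problem:
\begin{align}
    J(t) \;=\; \inf_{q \in \mathcal{L}(\mu,\alpha(t))} D(q \| \mu) \;=\; D\bigl(T(\mu,\alpha(t)) \,\|\, \mu\bigr),
\end{align}
since $D(q\|\mu) = D(T(\mu,\alpha(t))\|\mu) + D(q\|T(\mu,\alpha(t)))$ for every $q$ on that linear family, and the residual vanishes precisely at $q = T(\mu,\alpha(t))$.

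The main obstacle is promoting the type-wise approximation $\frac{1}{n} g_\mu(x^n) = \Phi(\mathbf{q}_{x^n}) + o(1)$ to a genuinely uniform statement fit for the contraction principle, and handling the range endpoints where $\alpha(t)$ approaches its boundary values ($t \to 0$ or $t \to \log|\mathcal{X}|$). Both are controlled by the unambiguity of $\mu$: it makes $\alpha \mapsto H(T(\mu,\alpha))$ a strictly monotone continuous bijection (so $\alpha(t)$ is well-defined and $\Phi$ inherits continuity), and it confines the ``ambiguous'' locus in the simplex to a lower-dimensional set that makes negligible contribution under the LDP.
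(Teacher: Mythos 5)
Your overall architecture---method of types, Sanov's theorem, then a variational/contraction step---is sound and is essentially the same circle of ideas as in the paper (which, note, does not reprove Theorem~\ref{thm:main} at all: it quotes it from prior work and recovers it as the $\mu\in\mathcal{T}_\nu^+$ special case of Theorem~\ref{thm:mism_ldp}, whose proof goes through inclusion relations between guesswork events and the type sets $\mathcal{D},\mathcal{E},\mathcal{B}$ plus Sanov, rather than a pointwise representation of $\frac{1}{n}g_\mu$). However, your second step contains a concrete error. The claim that strict monotonicity of $H(T(\mu,\alpha)\|\mu)$ forces the constraint $H(q\|\mu)\leq H(\mathbf{q}_{x^n}\|\mu)$ to bind fails whenever the uniform distribution is feasible, i.e.\ whenever $H(\mathbf{q}_{x^n}\|\mu)\geq H(\mathbf{u}_{\mathcal{X}}\|\mu)$: there the maximum of entropy over the half-space is $\log|\mathcal{X}|$, attained at $\mathbf{u}_{\mathcal{X}}$ in the interior, while $\Pi_{\mathcal{T}_\mu}(\mathbf{q}_{x^n})$ (Definition~\ref{def:projection} projects onto the full family, $\alpha\in\mathbb{R}$) sits at a negative tilt and has entropy strictly below $\log|\mathcal{X}|$. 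So the identity $\frac{1}{n}g_\mu(x^n)=H(\Pi_{\mathcal{T}_\mu}(\mathbf{q}_{x^n}))+o(1)$ is false on that region, and your statement that $\Phi^{-1}(\{t\})=\mathcal{L}(\mu,\alpha(t))$ is also false for your $\Phi$: since $\alpha\mapsto H(T(\mu,\alpha))$ is unimodal with maximum at $\alpha=0$, every $t\in(0,\log|\mathcal{X}|)$ is attained at one positive and one negative tilt, and $\Phi$ maps both corresponding linear families to $t$.

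The gap is repairable, and repairing it explains why the theorem restricts to $\alpha\geq 0$: the correct functional is $\Psi(q)=\max\{H(p):H(p\|\mu)\leq H(q\|\mu)\}$, which equals $H(\Pi_{\mathcal{T}_\mu}(q))$ only on $\{q: H(q\|\mu)\leq H(\mathbf{u}_{\mathcal{X}}\|\mu)\}$ and equals $\log|\mathcal{X}|$ beyond it (equivalently, projection restricted to the nonnegative branch). With $\Psi$ in place of $\Phi$, the preimage of any $t<\log|\mathcal{X}|$ is exactly $\mathcal{L}(\mu,\alpha(t))$ with $\alpha(t)>0$, your Pythagorean collapse gives $J(t)=D(T(\mu,\alpha(t))\|\mu)$, and at $t=\log|\mathcal{X}|$ the preimage is a half-space whose divergence infimum is $D(\mathbf{u}_{\mathcal{X}}\|\mu)=D(T(\mu,0)\|\mu)$, consistent with the stated formula. (With your uncorrected $\Phi$ the final number happens to be unchanged, because at equal entropy the negative-tilt member has strictly larger cross-entropy with $\mu$, hence strictly larger divergence; but that is an extra fact you would have to prove, and in any case the contraction principle applied to $\Phi(\mathbf{q}_{X^n})$ gives an LDP for the wrong random variable, not for $\frac{1}{n}g_\mu(X^n)$.) The remaining issue you flag---uniformity of the $o(1)$ so that $\frac{1}{n}g_\mu(X^n)$ and $\Psi(\mathbf{q}_{X^n})$ are exponentially equivalent---is the right thing to check and is standard: the error is $O(\log n/n)$ in sup-norm from the polynomially many type classes, combined with uniform continuity of $\Psi$ on the simplex.
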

LDP implies many of the results on the average growth rate of the moments, via Varadhan's lemma \cite[Theorem 4.3.1]{Dembo}, which is in essence Laplace's method extended to infinite dimensional spaces. In this setting, one aims at characterizing the normalized growth rate of the $\rho$-th moment of guesswork, denoted $E_\rho(\mu)$, that is for all $\rho> 0$:
\begin{align}
    E_\rho(\mu) \triangleq \frac{1}{\rho} \lim_{n \to \infty} \frac{1}{n} \log \mathbb{E}_{\mu^n}\left[G_{\mu^n}(X^n)^\rho\right].
\end{align}
The following is a direct consequence of Theorem~\ref{thm:main}.
\begin{corollary}\label{lem:guesswork_matched} 
We have,
\begin{align}
E_\rho(\mu) = \max_{\phi \in \mathcal{T}_\mu^+}\left\{ H(\phi) - \frac{1}{\rho}D(\phi\|\mu)\right\} \label{eq:regular_guesswork}.
\end{align}
\end{corollary}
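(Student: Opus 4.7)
The plan is to obtain Corollary~\ref{lem:guesswork_matched} from Theorem~\ref{thm:main} by applying Varadhan's lemma and then performing a change of variables. Setting $Y_n := \tfrac{1}{n} g_\mu(X^n)$, I would first rewrite the moment in exponential form
\begin{equation}
\mathbb{E}_{\mu^n}[G_{\mu^n}(X^n)^\rho] = \mathbb{E}_{\mu^n}[\exp(\rho\, g_\mu(X^n))] = \mathbb{E}_{\mu^n}[\exp(n \rho Y_n)],
\end{equation}
which is precisely the object that Varadhan's lemma computes, given the LDP for $Y_n$ supplied by Theorem~\ref{thm:main}.

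Next, I would invoke Varadhan's lemma \cite[Theorem~4.3.1]{Dembo} with the continuous test function $F(t) = \rho t$. The only hypothesis to check is a tail/moment condition, and this is immediate in our setting: guesswork is deterministically bounded by $|\mathcal{X}|^n$, so $Y_n \in [0, \log|\mathcal{X}|]$ almost surely and $F(Y_n)$ is uniformly bounded. Varadhan's lemma then yields
\begin{equation}
\lim_{n\to\infty} \tfrac{1}{n}\log \mathbb{E}_{\mu^n}[G_{\mu^n}(X^n)^\rho] = \sup_{t} \{\rho t - J(t)\},
\end{equation}
and dividing by $\rho$ gives $E_\rho(\mu) = \sup_t \{ t - \tfrac{1}{\rho} J(t)\}$.

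Finally, I would change variables from $t$ to $\phi$ using the parametrization of Theorem~\ref{thm:main}: for each $t$ in the effective support $[0,\log|\mathcal{X}|]$ there is a unique $\alpha(t) \geq 0$ with $H(T(\mu,\alpha(t))) = t$ and $J(t) = D(T(\mu,\alpha(t))\|\mu)$. Setting $\phi = T(\mu,\alpha(t))$, we have $t = H(\phi)$ and $J(t) = D(\phi\|\mu)$, and as $\alpha(t)$ ranges over $[0,\infty]$, $\phi$ traces the closure of $\mathcal{T}_\mu^+$. This rewrites the supremum as one over $\phi \in \mathrm{cl}\,\mathcal{T}_\mu^+$. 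To replace $\sup$ by $\max$ over $\mathcal{T}_\mu^+$, one checks that the optimizer is interior, i.e., $\alpha^\star \in (0,\infty)$, which follows from a short first-order analysis (Ar{\i}kan's classical answer $\alpha^\star = 1/(1+\rho)$ provides the explicit witness). The only real technical hurdle is this boundary/attainment check; Varadhan's lemma and the change of variables are otherwise mechanical once Theorem~\ref{thm:main} is in hand.
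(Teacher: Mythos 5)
Your proposal is correct and follows essentially the same route as the paper, which treats the corollary as a direct consequence of Theorem~\ref{thm:main} via Varadhan's lemma (the paper spells out exactly this argument---Varadhan with $F(t)=\rho t$ on the compact support $[0,\log|\mathcal{X}|]$, followed by the change of variables to the tilt parameter---in its proof of the mismatched analogue, Corollary~\ref{cor:mism_moments}). Your additional attention to attainment of the supremum at the interior point $\alpha^\star = 1/(1+\rho)$ is a reasonable refinement that the paper leaves implicit.
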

\noindent It is possible to express the solution for the optimization in \eqref{eq:regular_guesswork} in terms of R\'enyi entropies. Indeed, remarking that the optimization \eqref{eq:regular_guesswork} can be equivalently written as an optimization over the tilt parameter, we have that
\begin{align}
E_\rho(\mu) = \max_{\alpha \in \mathbb{R}^+}
\left\{ H(T(\mu,\alpha)) - \frac{1}{\rho}D(T(\mu,\alpha)\|\mu)\right\},
\end{align}
which is maximized by $\alpha = 1/ (1+\rho)$~\cite{Arikan}. This result was originally proved by
Ar{\i}kan, using direct non-asymptotic bounds. He further showed that $E_\rho(\mu) = H_{\frac{1}{1 + \rho}}(\mu)$. In the next section, we follow essentially the same thought process for mismatched guesswork. However, as we shall see, 
there we face further complications that could not be readily resolved using the original techniques used in the proofs of the results for the matched case.

\section{Mismatched Guesswork}\label{sec:mismatched}

In this section, we investigate the behavior of $G_{\nu^n}(X^n)$, when $X^n \sim \mu^n$. For the first time, we establish an LDP for the mismatched guesswork. We also aim at characterizing the exponent of the growth of the moments of mismatched guesswork, denoted by $E_\rho(\nu \| \mu)$, and defined as
\begin{align}
    E_\rho(\nu \| \mu) = \frac{1}{\rho} \lim_{n \to \infty} \mathbb{E}_{\mu^n}\left[G_{\nu^n}(X^n)^\rho \right].
\end{align}
The following result, proved in \cite[Theorem~1]{Beirami-IT}, characterizes the mismatched guesswork in the case where $\mu \in \mathcal{T}_\nu$.
\begin{lemma}[mismatched guesswork on the same tilted family]\label{lem:mism_same_tilted_family}
Let $\mu \in \mathcal{T}_\nu^+$, then $G_\nu(x) = G_\mu(x)$. Alternatively, let $\mu \in \mathcal{T}_\nu^-$, then $G_\nu(x) = |\mathcal{X}| - G_\mu(x)$.
\end{lemma}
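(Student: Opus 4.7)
The plan is to exploit the fact that the guesswork function $G_\mu(\cdot)$ depends on $\mu$ only through the total order it induces on $\mathcal{X}$, together with the fixed lexicographic tie-breaking convention. Hence it suffices to compare the orderings on $\mathcal{X}$ determined by $\mu$ and by $\nu$ under the hypothesis $\mu \in \mathcal{T}_\nu$, and then to translate an order identity into the claimed rank identities.

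First I would unpack the assumption $\mu \in \mathcal{T}_\nu$: by the definition of the tilted family (with $\mathbf{u}_{\mathcal{X}}$ as the base distribution), there exists $\alpha \in \mathbb{R}$ such that
\[
\mu(x) = T(\nu,\alpha)(x) = \frac{\nu(x)^\alpha}{Z(\alpha)},
\qquad
Z(\alpha) = \sum_{y \in \mathcal{X}} \nu(y)^\alpha > 0,
\]
where the normalization $Z(\alpha)$ does not depend on $x$. Consequently, for any $x, y \in \mathcal{X}$,
\[
\mu(x) \geq \mu(y) \iff \nu(x)^\alpha \geq \nu(y)^\alpha.
\]
If $\mu \in \mathcal{T}_\nu^+$, then $\alpha > 0$ and $t \mapsto t^\alpha$ is strictly increasing on $(0,\infty)$, so the right-hand side is equivalent to $\nu(x) \geq \nu(y)$. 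The two orderings on $\mathcal{X}$ therefore coincide, and the lexicographic tie-breaker acts on identical tied clusters, giving $G_\nu(x) = G_\mu(x)$ symbol by symbol. If $\mu \in \mathcal{T}_\nu^-$, then $\alpha < 0$ and $t \mapsto t^\alpha$ is strictly decreasing on $(0,\infty)$, so the order induced by $\mu$ is the exact reversal of the order induced by $\nu$, whence the rank of $x$ under $\nu$ equals $|\mathcal{X}|$ minus its rank under $\mu$ (up to the paper's indexing convention for ranks).

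The argument is otherwise straightforward; the only delicate point I foresee is ensuring the reflection identity in the $\alpha<0$ case holds \emph{symbol-by-symbol} rather than only up to permutations within tied clusters. This is handled by the observation that $\nu(x)^\alpha = \nu(y)^\alpha \iff \nu(x) = \nu(y)$ for any $\alpha \neq 0$, so $\mu$-ties coincide exactly with $\nu$-ties; since the same lexicographic rule is used to break both, each tied cluster is ordered identically in both rankings, and the reflection (respectively identity) passes from cluster level to individual symbols.
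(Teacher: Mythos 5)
Your positive-tilt argument is correct and is the natural order-preservation argument (note the paper itself gives no proof here; it simply cites \cite[Theorem~1]{Beirami-IT}): when $\mu = T(\nu,\alpha)$ with $\alpha>0$, the orderings and the tie clusters coincide, the same lexicographic rule sorts both lists identically, and $G_\nu = G_\mu$ follows symbol by symbol.

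The negative-tilt half, however, contains a genuine flaw precisely at the point you flag as delicate. It is true that $\nu(x)^\alpha = \nu(y)^\alpha \iff \nu(x)=\nu(y)$ for $\alpha\neq 0$, so the tie clusters coincide; but the conclusion that ``the reflection passes from cluster level to individual symbols'' is false when a tie cluster has size $m\geq 2$. If a cluster occupies positions $k+1,\dots,k+m$ in the $\nu$-ordering, it occupies positions $|\mathcal{X}|-k-m+1,\dots,|\mathcal{X}|-k$ in the $\mu$-ordering, and since the \emph{same} lexicographic rule orders the cluster identically (not reversely) in both lists, a symbol at internal position $j$ satisfies $G_\nu(x)+G_\mu(x) = |\mathcal{X}|+2j-m$, which is constant in $x$ only when $m=1$. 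Concretely, take $\mathcal{X}=\{a,b,c,d\}$, $\nu=(0.4,0.25,0.25,0.1)$ (unambiguous: the paper's assumption only forces unique argmin and argmax, not distinct values elsewhere) and $\alpha=-1$: then $G_\nu(b)+G_\mu(b)=4$ while $G_\nu(c)+G_\mu(c)=6$, so no single affine relation holds for all symbols. Thus your argument proves the reflection identity only when $\nu$ takes pairwise distinct values; in that case a sum check ($\sum_x G_\nu(x)=\sum_x G_\mu(x)=|\mathcal{X}|(|\mathcal{X}|+1)/2$) shows the exact identity is $G_\nu(x)=|\mathcal{X}|+1-G_\mu(x)$ under the rank convention $G:\mathcal{X}\to\{1,\dots,|\mathcal{X}|\}$, so the discrepancy with the displayed $|\mathcal{X}|-G_\mu(x)$ is more than a convention you can wave off without pinning the convention down. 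To repair the proof, either add the hypothesis that $\nu$ is injective as a function on $\mathcal{X}$ (generic, and sufficient for everything downstream), or prove the weaker statement $\bigl|G_\nu(x)-(|\mathcal{X}|+1-G_\mu(x))\bigr| \leq m_{\max}-1$ with $m_{\max}$ the largest tie-cluster size, which is all that the asymptotic (LDP and moment) consequences in the paper require.
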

\noindent Note that the previous result is non-asymptotic. In particular, it follows readily that $E_\rho(\nu \| \mu) = E_\rho(\mu)$ when $\mu \in \mathcal{T}_\nu^+$ and $E_\rho(\nu \| \mu) = \log(|\mathcal{X}|)$ when $\mu \in \mathcal{T}_\nu^-$. 

However, the techniques in~\cite{Beirami-IT} fall short on characterizing mismatch for $\mu \not \in \mathcal{T}_\nu$. Such characterization is given in the following theorem. 
\begin{theorem} [LDP for mismatched guesswork]\label{thm:mism_ldp}
For any unambiguous $\mu$ and $\nu$, such that $\Pi_{\mathcal{T}_\nu}(\mu) \in \mathcal{T}_\nu^+$, the sequence $\{\frac{1}{n}g_\nu(x^n)\}_{n \in \mathbb{N}^+}$ satisfies a LDP, with rate function $J(t)$, and the rate function is implicitly given by
\begin{align}
J(t) = D(\gamma_{\nu, \mu}(t) \| \mu),
\end{align}
for 
\begin{align}
&\gamma_{\nu, \mu}(t) = \mathcal{T}_{\nu,\mu} \cap \mathcal{L}(\nu,\alpha(t)), \label{eq:gamma_t}\\
&\alpha(t) = \arg_{\alpha\geq 0}\{H(T(\nu,\alpha)) = t\}.\label{eq:alpha_t}
\end{align}
\end{theorem}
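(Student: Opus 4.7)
The plan is to (1) reduce $g_\nu(x^n)$ to a functional of the empirical type $\mathbf{q}_{x^n}$ via the method of types, (2) apply Sanov's theorem together with the contraction principle to transfer the LDP from the type to $g_\nu$, and (3) identify the resulting I-projection using the duality between the linear family $\mathcal{L}(\nu,\alpha(t))$ and the mismatched tilted family $\mathcal{T}_{\nu,\mu}$.

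For step (1), the identity $\log \nu^n(y^n) = -n\,H(\mathbf{q}_{y^n}\|\nu)$ means that, up to lexicographic tie-breaking, $G_{\nu^n}(x^n)$ equals the number of sequences $y^n$ with $H(\mathbf{q}_{y^n}\|\nu) \leq H(\mathbf{q}_{x^n}\|\nu)$. Partitioning by type and using $|T(q')| \doteq e^{n H(q')}$, I expect
\begin{equation*}
\tfrac{1}{n} g_\nu(x^n) \;=\; \sup\bigl\{H(q') : q' \in \Delta_\mathcal{X},\ H(q'\|\nu) \leq H(\mathbf{q}_{x^n}\|\nu)\bigr\} + o(1).
\end{equation*}
Since $H(q'\|\nu)$ is linear in $q'$, standard maximum-entropy reasoning places the optimizer on $\mathcal{T}_\nu$ at the boundary of the sublevel set. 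Under the hypothesis $\Pi_{\mathcal{T}_\nu}(\mu) \in \mathcal{T}_\nu^+$, for every type in a neighborhood of $\mu$ this maximizer is exactly $\Pi_{\mathcal{T}_\nu}(\mathbf{q}_{x^n})$, yielding $\tfrac{1}{n} g_\nu(X^n) = H(\Pi_{\mathcal{T}_\nu}(\mathbf{q}_{X^n})) + o(1)$.

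For step (2), Sanov's theorem gives that $\mathbf{q}_{X^n}$ satisfies an LDP with good rate $D(\cdot\|\mu)$. Since $q \mapsto H(\Pi_{\mathcal{T}_\nu}(q))$ is continuous on the set of unambiguous distributions by Lemma~\ref{lem:existence_uniqueness}, the contraction principle yields an LDP for $\tfrac{1}{n} g_\nu(X^n)$ with rate
\begin{equation*}
J(t) \;=\; \inf\bigl\{D(q\|\mu) : H(\Pi_{\mathcal{T}_\nu}(q)) = t\bigr\}.
\end{equation*}
By Definition~\ref{def:projection}, fixing $H(\Pi_{\mathcal{T}_\nu}(q)) = t$ forces $\Pi_{\mathcal{T}_\nu}(q) = T(\nu,\alpha(t))$ and hence $H(q\|\nu) = H(T(\nu,\alpha(t))\|\nu)$, i.e., $q \in \mathcal{L}(\nu,\alpha(t))$.

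For step (3), I would solve the I-projection $\min_{q \in \mathcal{L}(\nu,\alpha(t))} D(q\|\mu)$ by a Lagrangian calculation against the affine constraint $\sum_x q(x) \log\nu(x) = \text{const}$. The stationarity condition gives $q(x) \propto \mu(x)\,\nu(x)^{\lambda}$, which by \eqref{eq:def_tilted_nu_mu} is precisely an element of $\mathcal{T}_{\nu,\mu}$; the Lagrange multiplier $\lambda$ is then pinned down by the constraint $q \in \mathcal{L}(\nu,\alpha(t))$, selecting the unique intersection $\gamma_{\nu,\mu}(t) = \mathcal{T}_{\nu,\mu} \cap \mathcal{L}(\nu,\alpha(t))$ and thus $J(t) = D(\gamma_{\nu,\mu}(t)\|\mu)$. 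The main obstacle I anticipate is step (1): controlling polynomial prefactors in the type counts, absorbing the contribution of ties into the $o(1)$ term, and certifying that the maximizer of $H(q')$ on $\{q' : H(q'\|\nu) \leq c\}$ lies in $\mathcal{T}_\nu^+$ for all types that contribute non-negligibly under $\mu^n$ — this last point is where the assumption $\Pi_{\mathcal{T}_\nu}(\mu) \in \mathcal{T}_\nu^+$ is essential. Once this identification is in place, Sanov and the contraction principle deliver the LDP cleanly, and the Lagrangian computation closes out the characterization of $J$.
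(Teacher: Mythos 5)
Your overall skeleton---reduce the guesswork to a functional of the empirical type, invoke Sanov, and identify the I-projection onto $\mathcal{L}(\nu,\alpha(t))$ as $\gamma_{\nu,\mu}(t)=\mathcal{T}_{\nu,\mu}\cap\mathcal{L}(\nu,\alpha(t))$ by a Lagrangian computation---is essentially the paper's, which carries out the transfer step via the sandwich sets $\mathcal{D},\mathcal{E},\mathcal{B}$ (imported from \cite{Beirami-IT}) after first normalizing to $\nu=\Pi_{\mathcal{T}_\nu}(\mu)$ through Lemma~\ref{lem:mism_same_tilted_family}. However, your steps (1)--(2) have a genuine gap. The identity $\tfrac1n g_\nu(x^n)=H(\Pi_{\mathcal{T}_\nu}(\mathbf{q}_{x^n}))+o(1)$ holds only when the projection of the type has nonnegative order: if $H(\mathbf{q}_{x^n}\|\nu)>H(\mathbf{u}_\mathcal{X}\|\nu)$, the sublevel set in your supremum contains $\mathbf{u}_\mathcal{X}$, the maximum-entropy point is interior rather than on the bounding hyperplane, and $\tfrac1n g_\nu(x^n)=\log|\mathcal{X}|+o(1)$, strictly larger than $H(\Pi_{\mathcal{T}_\nu}(\mathbf{q}_{x^n}))$. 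Certifying the identification ``for every type in a neighborhood of $\mu$'' does not rescue this: an LDP is precisely a statement about atypical types, and since $\mu(x)>0$ for all $x$, every type has finite $D(\cdot\|\mu)$, so the negative-tilt region carries probability of order $e^{-nc}$ with $c<\infty$; hence $\tfrac1n g_\nu(X^n)$ and $H(\Pi_{\mathcal{T}_\nu}(\mathbf{q}_{X^n}))$ are \emph{not} exponentially equivalent, and the contraction principle applied to your functional yields the LDP of a different sequence. The same oversight surfaces in the claim that $H(\Pi_{\mathcal{T}_\nu}(q))=t$ ``forces'' $q\in\mathcal{L}(\nu,\alpha(t))$ with $\alpha(t)\ge 0$: every level $t<\log|\mathcal{X}|$ is attained at two members of $\mathcal{T}_\nu$, one of positive and one of negative order, so your level set also contains the negative-order hyperplane.

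The repair keeps your route intact: replace the functional by its capped version, equal to $H(\Pi_{\mathcal{T}_\nu}(q))$ when the projection lies in $\mathcal{T}_\nu^+\cup\{\mathbf{u}_\mathcal{X}\}$ and to $\log|\mathcal{X}|$ otherwise. This map is continuous on the whole simplex, it is the correct uniform-in-type limit of $\tfrac1n g_\nu$, and its level sets for $t<\log|\mathcal{X}|$ are exactly $\mathcal{L}(\nu,\alpha(t))$ with $\alpha(t)\ge 0$; your Sanov-plus-contraction step and the Lagrangian identification then give $J(t)=D(\gamma_{\nu,\mu}(t)\|\mu)$ as claimed. (One can check a posteriori that the spurious negative-order hyperplane never attains a smaller divergence---cross entropy is affine and $D(\cdot\|\mu)$ is convex with minimum at $\mu$, which lies on the positive side under the hypothesis $\Pi_{\mathcal{T}_\nu}(\mu)\in\mathcal{T}_\nu^+$---but the proof still needs the capped functional, because the pointwise approximation itself fails on that region.) Finally, make the $o(1)$ in step (1) uniform over types (polynomially many types with uniform polynomial prefactors, ties absorbed into the count); this is exactly the content the paper packages into its inclusion lemma for $\mathcal{D},\mathcal{E},\mathcal{B}$, and it is the quantitative substitute for the exponential-equivalence statement your contraction argument requires.
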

Before we proceed to the proof, let us briefly discuss the result. Two features of this result are particularly interesting. First, note that while the value $\alpha(t)$ is determined through a similar implicit equation as the matched guesswork in Theorem~\ref{thm:main}, the rate function is controlled by $D(\gamma_{\nu,\mu}(t) \| \mu)$, where $\gamma_{\nu, \mu}(t) \in \mathcal{T}_{\nu,\mu}$. In particular, if $\mu \in \mathcal{T}_\nu^+$, then Theorem~\ref{thm:mism_ldp} recovers Theorem~\ref{thm:main} by observing that $\nu = T(\mu,\beta)$ for some $\beta > 0$, and thus $\gamma_{\nu,\mu}(t)$ can be reparameterized in terms of $\mu$ only.

The proof of Theorem~\ref{thm:mism_ldp} relies on a correspondence between guesswork, and some sets of distributions, which we will define shortly. This correspondence is implicitly used in \cite[proof of Theorem 5]{Beirami-IT} but it is not explicitly observed. For $\epsilon \geq 0$ and $\alpha \in \mathbb{R}$, let 
\begin{align}
    \mathcal{D}(\nu,\alpha,\epsilon) & \triangleq \left\{ \varphi \in \Delta_{\mathcal{X}}: H(\varphi \| \nu) - H(T(\nu,\alpha)\| \nu) \leq  \epsilon \right\} \\
    \mathcal{E}(\nu,\alpha,\epsilon) & \triangleq \left\{ \varphi \in \Delta_{\mathcal{X}}: H(\varphi \| \nu) - H(T(\nu,\alpha)\| \nu) \geq  - \epsilon \right\} \\
    \mathcal{B}(\nu,\alpha,\epsilon) &\triangleq \left\{\varphi \in \Delta_{\mathcal{X}}:H(T(\nu,\alpha)\|\nu) - H(\varphi \| \nu) \in [0,\epsilon]\right\}.
\end{align}
The sets above are extensions of tilted weakly typical sets of order $\alpha$~\cite[Definition 18]{Beirami-IT}, and capture the set of types which are respectively, more likely, less likely, and as likely according to $\nu$ than $T(\nu,\alpha)$.
For these sets, we then have the following lemma.
\begin{lemma}
For any $\alpha > 0$, the following inclusion relations hold, for sufficiently large $n$,
    \begin{align}
    & \left| \frac{1}{n} g_\nu(x^n) - H(T(\nu,\alpha)) \right| \leq \epsilon  \Rightarrow  \mathbf{q}_{x^n} \in \mathcal{D}(\nu,\alpha,2\epsilon/\alpha), \label{eq:D_set}\\
    & \left| \frac{1}{n} g_\nu(x^n) - H(T(\nu,\alpha)) \right| \leq \epsilon  \Rightarrow  \mathbf{q}_{x^n} \in \mathcal{E}(\nu,\alpha,2\epsilon/\alpha), \label{eq:E_set}\\
    &\left| \frac{1}{n} g_\nu(x^n) - H(T(\nu,\alpha)) \right| \leq \epsilon \Leftarrow \mathbf{q}_{x^n} \in \mathcal{B}(\nu,\alpha,\epsilon/\alpha).\label{eq:B_set}
\end{align}
\end{lemma}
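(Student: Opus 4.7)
The plan is to reduce the guesswork to a counting problem via the method of types, identify the max-entropy solution on the positive tilted family, and then translate entropy deviations into cross-entropy deviations via a differential identity on the tilt parameter.

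Starting from $\nu^n(x^n) = \exp(-nH(\mathbf{q}_{x^n}\|\nu))$, the guesswork $G_\nu(x^n)$ counts the sequences $y^n$ whose type satisfies $H(\mathbf{q}_{y^n}\|\nu) \leq H(\mathbf{q}_{x^n}\|\nu)$. Grouping by type and using $|T(\gamma)| = e^{nH(\gamma)+O(\log n)}$ together with $|\Gamma_n|\leq (n+1)^{|\mathcal{X}|}$ yields
\begin{equation*}
\tfrac{1}{n}g_\nu(x^n) \;=\; \max_{\gamma \in \Delta_{\mathcal{X}}:\, H(\gamma\|\nu)\leq H(\mathbf{q}_{x^n}\|\nu)} H(\gamma) \;+\; o(1),
\end{equation*}
uniformly in $x^n$. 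The inner max-entropy problem under a cross-entropy constraint has a classical Gibbs solution on the positive tilted family: the maximizer is $T(\nu,\beta(x^n))$ for the unique $\beta(x^n)\geq 0$ satisfying $H(T(\nu,\beta(x^n))\|\nu)=H(\mathbf{q}_{x^n}\|\nu)$. Hence $\tfrac{1}{n}g_\nu(x^n) = H(T(\nu,\beta(x^n))) + o(1)$.

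The second ingredient is a differential identity derived from $H(T(\nu,s)) = sH(T(\nu,s)\|\nu)+\log Z(s)$, where $Z(s)=\sum_x\nu(x)^s$, together with $\tfrac{d}{ds}\log Z(s) = -H(T(\nu,s)\|\nu)$, namely
\begin{equation*}
\frac{d\, H(T(\nu,s))}{ds} \;=\; s\cdot \frac{d\, H(T(\nu,s)\|\nu)}{ds}.
\end{equation*}
Integrating from $\alpha$ to $\beta$ and bounding $s$ by its minimum and maximum over the interval gives the two-sided estimate
\begin{equation*}
\min(\alpha,\beta)\bigl|H(T(\nu,\beta)\|\nu) - H(T(\nu,\alpha)\|\nu)\bigr| \;\leq\; \bigl|H(T(\nu,\beta)) - H(T(\nu,\alpha))\bigr| \;\leq\; \max(\alpha,\beta)\bigl|H(T(\nu,\beta)\|\nu) - H(T(\nu,\alpha)\|\nu)\bigr|,
\end{equation*}
which converts between entropy and cross-entropy deviations with a slope close to $\alpha$ whenever $\beta$ is in a neighborhood of $\alpha$.

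All three implications then drop out. For $\mathcal{D}$ and $\mathcal{E}$, the hypothesis $|\tfrac{1}{n}g_\nu(x^n)-H(T(\nu,\alpha))|\leq\epsilon$ combined with the asymptotic gives $|H(T(\nu,\beta))-H(T(\nu,\alpha))|\leq\epsilon+o(1)$, and the lower half of the integral bound yields $|H(\mathbf{q}_{x^n}\|\nu)-H(T(\nu,\alpha)\|\nu)|\leq 2\epsilon/\alpha$ for $n$ large enough; splitting by sign recovers the $\mathcal{D}$ and $\mathcal{E}$ memberships. For the reverse $\mathcal{B}$ implication, the hypothesis $\mathbf{q}_{x^n}\in\mathcal{B}(\nu,\alpha,\epsilon/\alpha)$ forces $\beta(x^n)\geq\alpha$ and close to it, and the upper half of the integral bound gives $|H(T(\nu,\beta))-H(T(\nu,\alpha))|\leq\epsilon+o(1)$, which via the asymptotic produces the guesswork bound. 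The main obstacle is the uniform bookkeeping: for $n$ large enough one must trap $\beta(x^n)$ in an interval around $\alpha$ on which the slope $s\,\tfrac{dH(T(\nu,s)\|\nu)}{ds}$ stays within a constant factor of its value at $\alpha$, which is precisely what the factor of $2$ in $\mathcal{D}$ and $\mathcal{E}$ allows. This step needs $\alpha>0$ (so the slope is nonvanishing) and the unambiguousness of $\nu$ (so $H(T(\nu,\cdot)\|\nu)$ is strictly monotone and continuously invertible).
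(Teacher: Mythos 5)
The paper does not actually prove this lemma; it defers to the proofs of Theorems 3 and 5 of the cited work of Beirami \emph{et al.} Your self-contained route is therefore genuinely different: you reduce $\tfrac1n g_\nu(x^n)$ by the method of types to a maximum-entropy problem under the cross-entropy constraint $H(\gamma\|\nu)\le H(\mathbf{q}_{x^n}\|\nu)$, identify the Gibbs maximizer $T(\nu,\beta(x^n))$ on the positive tilted family, and convert entropy deviations into cross-entropy deviations via the identity $\tfrac{d}{ds}H(T(\nu,s))=s\,\tfrac{d}{ds}H(T(\nu,s)\|\nu)$. Those ingredients are correct (modulo glossed but standard points: the uniform $o(1)$ in the type-counting lower bound, and the capped case $\beta(x^n)=0$ when no nonnegative solution of $H(T(\nu,\beta)\|\nu)=H(\mathbf{q}_{x^n}\|\nu)$ exists), and this is an attractive way to make the lemma self-contained.

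The gap is in the final bookkeeping, and it cannot be repaired by taking $n$ large: it is an $\epsilon$-smallness requirement that you never impose. The $\mathcal{E}$ implication is indeed fine as you argue, since when $\beta(x^n)<\alpha$ membership in $\mathcal{E}$ is automatic and when $\beta(x^n)\ge\alpha$ one has $\min(\alpha,\beta)=\alpha$, giving $(\epsilon+o(1))/\alpha\le 2\epsilon/\alpha$. But for $\mathcal{D}$ the binding case is $\beta(x^n)<\alpha$, where your lower integral bound only yields $H(\mathbf{q}_{x^n}\|\nu)-H(T(\nu,\alpha)\|\nu)\le(\epsilon+o(1))/\beta(x^n)$; to get $2\epsilon/\alpha$ you need $\beta(x^n)\gtrsim\alpha/2$, and the hypothesis $|\tfrac1n g_\nu(x^n)-H(T(\nu,\alpha))|\le\epsilon$ traps $\beta(x^n)$ that close to $\alpha$ only when $\epsilon$ is small relative to $\alpha$ and to the curvature of $s\mapsto H(T(\nu,s))$ --- not when $n$ is large. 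Concretely, if $\epsilon\ge\log|\mathcal{X}|-H(T(\nu,\alpha))$, the sequence of $n$ copies of the $\nu$-least-likely symbol satisfies the hypothesis with $\tfrac1n g_\nu=\log|\mathcal{X}|$, while $\log\bigl(1/\min_x\nu(x)\bigr)-H(T(\nu,\alpha)\|\nu)$ can exceed $2\epsilon/\alpha$ (e.g.\ $\nu$ nearly uniform, $\alpha$ large), so your argument for $\mathcal{D}$ must be restricted to small $\epsilon$, which is the regime in which the lemma is actually used as $\epsilon\downarrow 0$. A similar slack problem sits in your $\mathcal{B}$ step: there $\beta(x^n)\ge\alpha$, so the ``upper half'' of your integral bound gives $H(T(\nu,\alpha))-H(T(\nu,\beta(x^n)))\le\beta(x^n)\,\epsilon/\alpha$, which is \emph{at least} $\epsilon$ rather than at most $\epsilon+o(1)$ as claimed (and at the far edge of the window the drop is genuinely $\ge\epsilon$, since the slope there exceeds $\alpha$); this closes only after restricting to small $\epsilon$ so that the excess $(\beta(x^n)-\alpha)\epsilon/\alpha$ is second order, or after shrinking the $\mathcal{B}$ window accordingly. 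State the $\epsilon$-quantifier explicitly and the proof is sound; as written, the ``for $n$ large enough'' trapping claim is the flaw.
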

This was proved implicitly in the proofs of Theorems 3 and 5 in~\cite{Beirami-IT}.
We are now equipped to provide the proof of the main theorem.

\definecolor{mycolor1}{rgb}{0.00000,0.44700,0.74100}%
\definecolor{mycolor2}{rgb}{0.85000,0.32500,0.09800}%
\definecolor{mycolor3}{rgb}{0.00000,0.49804,0.00000}%

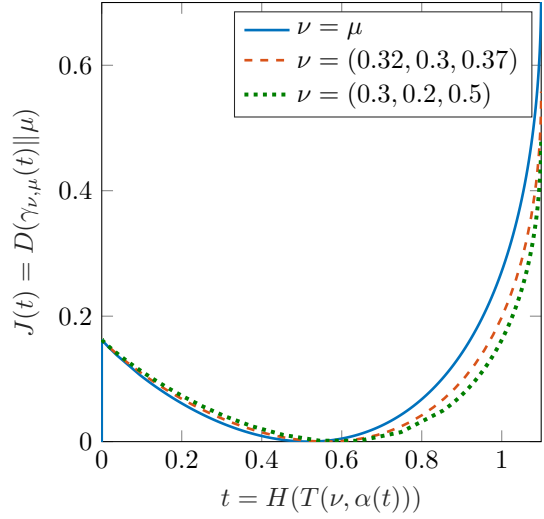
\begin{figure}
\centering
\begin{tikzpicture}

\begin{axis}[%
width=2.3in,
height=2.3in,
at={(1.273in,0.674in)},
scale only axis,
xmin=0,
xmax=1.0986,
xlabel style={font=\color{white!15!black}},
xlabel={$t = H(T(\nu,\alpha(t)))$},
ymin=0,
ymax=0.7,
ylabel style={font=\color{white!15!black}},
ylabel={$J(t) = D(\gamma_{\nu,\mu}(t) \| \mu)$},
axis background/.style={fill=white},
legend style={legend cell align=left, align=left, draw=white!15!black}
]

\addplot [color=mycolor1, line width=1.0pt]
  table[row sep=crcr]{%
5.69810126605036e-36	-0.165396390147414\\
1.64050187223172e-35	-0.165396390147414\\
4.72230142424492e-35	-0.165396390147414\\
1.35912465237657e-34	-0.165396390147414\\
3.91103425649944e-34	-0.165396390147414\\
1.12524930635456e-33	-0.165396390147414\\
3.23689545703947e-33	-0.165396390147414\\
9.30956259463682e-33	-0.165396390147414\\
2.67700002665843e-32	-0.165396390147414\\
7.69633017252035e-32	-0.165396390147414\\
2.2122430776284e-31	-0.165396390147414\\
6.35760231874108e-31	-0.165396390147414\\
1.82668071274069e-30	-0.165396390147414\\
5.24732539234831e-30	-0.165396390147414\\
1.5070115881272e-29	-0.165396390147414\\
4.32708549192297e-29	-0.165396390147414\\
1.24214288495083e-28	-0.165396390147414\\
3.56485274066123e-28	-0.165396390147414\\
1.02282703214198e-27	-0.165396390147414\\
2.93393039180729e-27	-0.165396390147414\\
8.41357720506441e-27	-0.165396390147414\\
2.41207556192484e-26	-0.165396390147414\\
6.91315594535163e-26	-0.165396390147414\\
1.98076374879718e-25	-0.165396390147414\\
5.67355558293439e-25	-0.165396390147414\\
1.62457378209124e-24	-0.165396390147414\\
4.6502893310898e-24	-0.165396390147414\\
1.33067362533946e-23	-0.165396390147414\\
3.80634797151097e-23	-0.165396390147414\\
1.08839061251199e-22	-0.165396390147414\\
3.11095753808713e-22	-0.165396390147414\\
8.88852358349466e-22	-0.165396390147414\\
2.53854145123728e-21	-0.165396390147414\\
7.24686810151469e-21	-0.165396390147414\\
2.06785372032806e-20	-0.165396390147414\\
5.89771778332824e-20	-0.165396390147414\\
1.68125524664591e-19	-0.165396390147414\\
4.79025851315772e-19	-0.165396390147414\\
1.36410980114155e-18	-0.165396390147414\\
3.88233915587993e-18	-0.165396390147414\\
1.10428005993813e-17	-0.165396390147414\\
3.13901604672604e-17	-0.165396390147414\\
8.91707274231496e-17	-0.165396390147414\\
2.53133925488241e-16	0.162518929497775\\
7.18060828746545e-16	0.162518929497774\\
2.03534085834697e-15	0.162518929497773\\
5.98650648066045e-15	0.16251892949777\\
1.67560088746288e-14	0.162518929497759\\
4.74484284215328e-14	0.16251892949773\\
1.3424687037905e-13	0.162518929497649\\
3.78930876930705e-13	0.162518929497421\\
1.06914541782032e-12	0.162518929496777\\
3.0129492778173e-12	0.16251892949497\\
8.48077607055745e-12	0.162518929489902\\
2.38413643921641e-11	0.162518929475704\\
6.69325588911817e-11	0.162518929436004\\
1.87633680042455e-10	0.162518929325192\\
5.25171710385576e-10	0.162518929016487\\
1.46742906398864e-09	0.162518928158314\\
4.09279440020968e-09	0.162518925778166\\
1.13926979743423e-08	0.162518919193503\\
3.16452184981101e-08	0.162518901027535\\
8.76980351580497e-08	0.16251885106331\\
2.42433075979925e-07	0.162518714101266\\
6.68389215484781e-07	0.162518340051977\\
1.83744545872935e-06	0.162517322700415\\
5.03567243980062e-06	0.162514568296482\\
1.37553127065777e-05	0.162507148926443\\
3.74429728433717e-05	0.162487278204754\\
0.000101550474397656	0.162434406186378\\
0.000274371279723133	0.162294784172581\\
0.000738368696953217	0.161929396285093\\
0.00197872606270698	0.160984004722626\\
0.00527777881816528	0.158575460336145\\
0.0139921711423854	0.152580749142067\\
0.0367403525515901	0.138246301094247\\
0.0946597322741494	0.106612523747575\\
0.0964377016276169	0.105725775568152\\
0.098247441748898	0.104827752017456\\
0.100089459163482	0.103918410591087\\
0.101964266058076	0.10299771308882\\
0.103872380239804	0.102065625838437\\
0.105814325089531	0.10112211992711\\
0.107790629509033	0.100167171440539\\
0.109801827861749	0.0992007617099997\\
0.111848459906831	0.0982228775675145\\
0.113931070726209	0.0972335116092936\\
0.116050210644364	0.0962326624676537\\
0.11820643514051	0.0952203350915823\\
0.120400304752875	0.0941965410361321\\
0.12263238497475	0.0931612987608238\\
0.124903246141972	0.0921146339372354\\
0.12721346331152	0.0910565797659548\\
0.129563616130855	0.0899871773030672\\
0.131954288697657	0.0889064757963504\\
0.13438606940959	0.0878145330313442\\
0.136859550803732	0.0867114156874578\\
0.139375329385264	0.0855971997042738\\
0.141934005445056	0.084471970658204\\
0.144536182865733	0.0833358241496433\\
0.147182468915813	0.0821888662007653\\
0.149873474031519	0.0810312136640928\\
0.152609811585825	0.0798629946419705\\
0.155392097644317	0.0786843489170561\\
0.158220950707431	0.0774954283939402\\
0.161096991438623	0.0762963975519896\\
0.164020842378021	0.0750874339095013\\
0.166993127641112	0.0738687284992401\\
0.170014472601996	0.0726404863554179\\
0.17308550356074	0.0714029270121612\\
0.176206847394374	0.0701562850134933\\
0.179379131191044	0.0689008104348436\\
0.182602981866855	0.0676367694160766\\
0.185879025764928	0.0663644447060127\\
0.189207888236176	0.0650841362183937\\
0.192590193201345	0.06379616159922\\
0.19602656269382	0.0625008568053661\\
0.199517616382735	0.0611985766943512\\
0.203063971075901	0.0598896956251178\\
0.206666240202091	0.0585746080696389\\
0.210325033272215	0.057253729235146\\
0.214040955318929	0.0559274956967363\\
0.217814606314226	0.0545963660400822\\
0.221646580564577	0.053260821513931\\
0.225537466083179	0.0519213666920431\\
0.229487843938925	0.0505785301441779\\
0.233498287581662	0.0492328651156922\\
0.237569362143387	0.0478849502152715\\
0.241701623714997	0.04653539011027\\
0.245895618598266	0.0451848162290825\\
0.250151882532733	0.0438338874699221\\
0.254470939897198	0.0424832909153243\\
0.258853302885587	0.0411337425516404\\
0.263299470656939	0.0397859879927267\\
0.267809928459329	0.0384408032069739\\
0.272385146727579	0.0370989952467595\\
0.277025580154622	0.0357614029793396\\
0.281731666736467	0.0344288978181299\\
0.286503826790736	0.0331023844532554\\
0.291342461948788	0.0317828015801757\\
0.296247954121532	0.0304711226251196\\
0.301220664439044	0.0291683564659858\\
0.306260932164214	0.0278755481472873\\
0.311369073580663	0.026593779587638\\
0.316545380855296	0.0253241702781958\\
0.321790120875865	0.0240678779703946\\
0.327103534064064	0.0228260993512098\\
0.332485833164703	0.0216000707041162\\
0.337937202011622	0.0203910685538079\\
0.343457794271093	0.0192004102926608\\
0.349047732163558	0.0180294547868274\\
0.354707105164621	0.0168796029597635\\
0.360435968686372	0.0157522983508943\\
0.366234342740156	0.0146490276470365\\
0.37210221058209	0.0135713211841018\\
0.378039517342674	0.0125207534165177\\
0.384046168642027	0.0114989433517106\\
0.390122029192354	0.0105075549469094\\
0.396266921389409	0.00954829746544272\\
0.402480623894834	0.00862292578961713\\
0.408762870211409	0.00773324068718378\\
0.415113347253362	0.00688108902832491\\
0.421531693914054	0.0060683639500161\\
0.428017499633483	0.00529700496455208\\
0.434570302968203	0.00456899800896066\\
0.441189590166415	0.0038863754319701\\
0.447874793751111	0.00325121591514495\\
0.454625291114336	0.00266564432475974\\
0.461440403125761	0.00213183149094484\\
0.468319392758925	0.00165199391060929\\
0.47526146373865	0.00122839337062869\\
0.482265759213301	0.000863336487776446\\
0.48933136045567	0.000559174161880766\\
0.496457285596477	0.000318300938703319\\
0.503642488394543	0.000143154279063618\\
0.510885857047902	3.62137307733134e-05\\
0.518186213050213	1.01013113691836e-35\\
0.525542310096954	3.70739187502014e-05\\
0.532952833046039	0.000150035305247391\\
0.540416396937568	0.000341521714083809\\
0.547931546077554	0.000614207073145191\\
0.555496753190544	0.000970800204446317\\
0.563110418646155	0.00141404322617272\\
0.570770869764584	0.00194670983340133\\
0.578476360206241	0.0025716034551702\\
0.586225069450666	0.00329155528578553\\
0.594015102369936	0.00410942218849391\\
0.601844488901762	0.00502808446990842\\
0.609711183827479	0.00605044352386051\\
0.617613066660084	0.00717941934365431\\
0.625547941647445	0.0084179479020274\\
0.633513537895717	0.00976897839847151\\
0.641507509617914	0.0112354703739383\\
0.649527436512462	0.0128203906933476\\
0.657570824276422	0.0145267103967317\\
0.665635105257884	0.016357401420282\\
0.673717639251859	0.0183154331890227\\
0.681815714443765	0.0204037690833062\\
0.689926548504334	0.0226253627818218\\
0.698047289839521	0.0249831544843143\\
0.706175018998661	0.0274800670177348\\
0.714306750243791	0.0301190018300839\\
0.722439433282698	0.0329028348767566\\
0.730569955167844	0.0358344124047565\\
0.738695142362916	0.0389165466407167\\
0.746811762978278	0.0421520113892319\\
0.754916529176149	0.0455435375485865\\
0.763006099745799	0.049093808551533\\
0.771077082848571	0.0528054557393486\\
0.779126038931924	0.0566810536779655\\
0.787149483811196	0.0607231154255192\\
0.795143891917111	0.0649340877612126\\
0.803105699706519	0.0693163463859124\\
0.811031309233172	0.0738721911054051\\
0.818917091874739	0.0786038410077269\\
0.826759392211585	0.083513429646435\\
0.834554532052201	0.0886030002421185\\
0.842298814599502	0.0938745009148377\\
0.849988528751534	0.0993297799605386\\
0.857619953529495	0.1049705811848\\
0.865189362625293	0.110798539307543\\
0.872693029060231	0.11681517545255\\
0.880127229945785	0.123021892735813\\
0.887488251336785	0.12941997196685\\
0.894772393166783	0.13601056747717\\
0.901975974254745	0.142794703090091\\
0.909095337371721	0.149773268246027\\
0.91612685435562	0.156947014297257\\
0.923066931261731	0.164316550985971\\
0.929912013536245	0.171882343119154\\
0.936658591199583	0.179644707453522\\
0.943303204026073	0.187603809803351\\
0.949842446706166	0.195759662383556\\
0.956272973977198	0.204112121399876\\
0.962591505708498	0.21266088489739\\
0.968794831926544	0.221405490877974\\
0.974879817765819	0.230345315696554\\
0.980843408330998	0.239479572745234\\
0.986682633456215	0.248807311433565\\
0.992394612347267	0.258327416472285\\
0.997976558092817	0.268038607466955\\
1.00342578203095	0.27793943882691\\
1.00873969795779	0.288028299993909\\
1.01391582616516	0.298303415993824\\
1.01895179729497	0.308762848313578\\
1.02384535599837	0.319404496104455\\
1.0285943643882	0.330226097711733\\
1.03319680527428	0.341225232529456\\
1.03765078517134	0.352399323177995\\
1.04195453707055	0.363745638000874\\
1.04610642296598	0.375261293876201\\
1.05010493612862	0.386943259336892\\
1.05394870312107	0.398788357992739\\
1.05763648554721	0.410793272246298\\
1.06116718153185	0.422954547293498\\
1.06453982692664	0.435268595398821\\
1.06775359623915	0.447731700433969\\
1.07080780328324	0.460340022667945\\
1.07370190154985	0.47308960379562\\
1.07643548429824	0.485976372191051\\
1.07900828436873	0.49899614837103\\
1.08142017371903	0.51214465065366\\
1.08367116268719	0.525417500996159\\
1.08576139898514	0.538810230995529\\
1.08769116642768	0.552318288035253\\
1.08946088340283	0.565937041560832\\
1.09107110109007	0.579661789466612\\
1.09252250143397	0.593487764576184\\
1.09381589488149	0.607410141198445\\
1.09495221789192	0.621424041741377\\
1.09593253022903	0.63552454336559\\
1.09675801204577	0.649706684659802\\
1.09742996077224	0.663965472320568\\
1.09794978781844	0.678295887818836\\
1.09831901510338	0.692692894036222\\
1.09853927142282	0.707151441854279\\
};
\addlegendentry{$\nu=\mu$}

\addplot [color=mycolor2, dashed, line width=1.0pt]
  table[row sep=crcr]{%
0.0225272768963477	0.148396188538835\\
0.0240325971612508	0.147525832136216\\
0.0256373111722396	0.146604324489889\\
0.0273478520712815	0.145628926332639\\
0.0291710525062492	0.144596783872479\\
0.0311141669810326	0.143504928424139\\
0.0331848950722413	0.142350276823065\\
0.0353914054802113	0.141129632764549\\
0.0377423608671938	0.139839689230105\\
0.0402469434180846	0.138477032184422\\
0.0429148810384713	0.137038145750238\\
0.0457564740807024	0.135519419094748\\
0.048782622460653	0.133917155290319\\
0.0520048529953558	0.132227582444381\\
0.0554353467541086	0.130446867428506\\
0.0590869661724531	0.128571132575073\\
0.0629732816288694	0.126596475751777\\
0.0671085971274291	0.124518994269183\\
0.0715079746652615	0.12233481312536\\
0.0761872567907123	0.120040118143166\\
0.0811630867757189	0.117631194610714\\
0.0864529257334194	0.115104472093123\\
0.0920750659085512	0.112456576142912\\
0.0980486392531492	0.109684387697377\\
0.104393620272797	0.10678511101183\\
0.111130821988863	0.103756351036809\\
0.118281883709644	0.10059620120322\\
0.125869249138307	0.0973033426286626\\
0.133916133168784	0.0938771557988421\\
0.142446475533527	0.0903178458048508\\
0.151484879271507	0.0866265822260957\\
0.161056531784112	0.0828056547336748\\
0.171187106045041	0.0788586454430252\\
0.181902639333781	0.0747906189594724\\
0.193229386678501	0.0706083309264383\\
0.205193646032992	0.0663204556924611\\
0.217821552085959	0.0619378334474346\\
0.231138835524633	0.0574737368272136\\
0.24517054456656	0.0529441565338769\\
0.25994072565521	0.0483681049511588\\
0.275472060411776	0.0437679360352374\\
0.291785456275958	0.0391696789148942\\
0.308899588784415	0.0346033816289164\\
0.326830394161394	0.0301034602521966\\
0.345590511867788	0.0257090473104009\\
0.365188678008511	0.0214643318586361\\
0.385629072066906	0.0174188819149546\\
0.40691062134713	0.0136279381208081\\
0.429026269779252	0.0101526655906048\\
0.451962220381255	0.00706034897478002\\
0.475697163660836	0.00442451388245349\\
0.500201507534859	0.00232495610459493\\
0.525436627868865	0.000847658688668241\\
0.551354162376649	8.45760105273438e-05\\
0.577895374208306	0.00013326376313059\\
0.604990614883185	0.00109633444651557\\
0.632558919031935	0.00308071971815321\\
0.660507765395127	0.00619672405662784\\
0.688733039347971	0.0105568587831302\\
0.71711923153006	0.0162744516963272\\
0.745539904617204	0.0234620354424588\\
0.773858455587107	0.0322295271831252\\
0.801929193797991	0.0426822229126425\\
0.829598745747151	0.0549186415255047\\
0.85670778560796	0.0690282658746918\\
0.883093076872743	0.0850892398684895\\
0.90858979520077	0.103166091255349\\
0.933034086662374	0.123307558201865\\
0.956265799962464	0.145544603124306\\
0.978131317045779	0.169888698631176\\
0.998486394915895	0.196330467208778\\
1.01719892367089	0.224838748068007\\
1.03415150262372	0.255360151382478\\
1.04924373858969	0.287819142430407\\
1.06239417824288	0.322118676784783\\
1.07354179966077	0.358141383944349\\
1.08264700610986	0.395751272215205\\
1.08280811904471	0.396518637460999\\
1.08296840714838	0.397286575786182\\
1.08312787037234	0.398055085906931\\
1.08328650867159	0.398824166537211\\
1.0834443220046	0.399593816388794\\
1.08360131033335	0.400364034171245\\
1.08375747362332	0.401134818591957\\
1.08391281184348	0.401906168356147\\
1.08406732496629	0.402678082166863\\
1.08422101296773	0.403450558724997\\
1.08437387582722	0.404223596729296\\
1.08452591352771	0.404997194876368\\
1.08467712605563	0.405771351860685\\
1.08482751340089	0.406546066374609\\
1.08497707555688	0.40732133710838\\
1.08512581252048	0.408097162750148\\
1.08527372429206	0.408873541985958\\
1.08542081087545	0.409650473499775\\
1.08556707227798	0.410427955973495\\
1.08571250851044	0.411205988086938\\
1.0858571195871	0.41198456851787\\
1.08600090552569	0.412763695942016\\
1.08614386634745	0.413543369033061\\
1.08628600207704	0.414323586462656\\
1.08642731274263	0.415104346900433\\
1.08656779837581	0.415885649014018\\
1.08670745901168	0.416667491469034\\
1.08684629468876	0.417449872929109\\
1.08698430544906	0.418232792055897\\
1.08712149133804	0.419016247509063\\
1.0872578524046	0.419800237946329\\
1.08739338870111	0.420584762023445\\
1.08752810028338	0.421369818394221\\
1.08766198721067	0.422155405710539\\
1.08779504954569	0.42294152262234\\
1.08792728735459	0.423728167777663\\
1.08805870070696	0.424515339822635\\
1.08818928967585	0.42530303740148\\
1.08831905433771	0.426091259156534\\
1.08844799477246	0.426880003728258\\
1.08857611106342	0.427669269755247\\
1.08870340329736	0.428459055874228\\
1.08882987156448	0.429249360720078\\
1.08895551595839	0.430040182925841\\
1.08908033657613	0.430831521122723\\
1.08920433351815	0.431623373940108\\
1.08932750688834	0.432415740005569\\
1.08944985679398	0.433208617944887\\
1.08957138334577	0.434002006382027\\
1.08969208665781	0.434795903939195\\
1.08981196684763	0.435590309236804\\
1.08993102403613	0.436385220893516\\
1.09004925834763	0.437180637526238\\
1.09016666990986	0.43797655775012\\
1.09028325885391	0.438772980178591\\
1.09039902531429	0.439569903423353\\
1.09051396942887	0.440367326094386\\
1.09062809133895	0.441165246799962\\
1.09074139118917	0.441963664146672\\
1.09085386912756	0.442762576739405\\
1.09096552530554	0.443561983181384\\
1.0910763598779	0.444361882074158\\
1.09118637300278	0.445162272017624\\
1.09129556484172	0.445963151610035\\
1.09140393555958	0.446764519447998\\
1.09151148532462	0.447566374126492\\
1.09161821430845	0.448368714238896\\
1.09172412268601	0.44917153837696\\
1.09182921063561	0.449974845130849\\
1.09193347833891	0.450778633089133\\
1.0920369259809	0.451582900838812\\
1.09213955374992	0.452387646965309\\
1.09224136183764	0.4531928700525\\
1.09234235043908	0.453998568682697\\
1.09244251975256	0.454804741436689\\
1.09254186997975	0.455611386893731\\
1.09264040132564	0.456418503631557\\
1.09273811399853	0.457226090226396\\
1.09283500821003	0.458034145252974\\
1.09293108417509	0.458842667284536\\
1.09302634211194	0.459651654892841\\
1.09312078224212	0.460461106648184\\
1.09321440479049	0.461271021119402\\
1.09330720998516	0.46208139687388\\
1.0933991980576	0.462892232477571\\
1.0934903692425	0.463703526494989\\
1.09358072377789	0.46451527748924\\
1.09367026190504	0.465327484022018\\
1.09375898386853	0.466140144653619\\
1.09384688991619	0.466953257942951\\
1.09393398029914	0.467766822447539\\
1.09402025527174	0.468580836723549\\
1.09410571509162	0.469395299325784\\
1.09419036001969	0.470210208807701\\
1.09427419032008	0.471025563721418\\
1.09435720626019	0.471841362617725\\
1.09443940811065	0.472657604046096\\
1.09452079614535	0.473474286554703\\
1.09460137064138	0.474291408690406\\
1.09468113187911	0.475108968998796\\
1.09476008014211	0.475926966024176\\
1.09483821571716	0.476745398309586\\
1.09491553889429	0.477564264396804\\
1.09499204996672	0.478383562826375\\
1.09506774923089	0.479203292137595\\
1.09514263698646	0.480023450868538\\
1.09521671353626	0.480844037556064\\
1.09528997918634	0.481665050735825\\
1.09536243424594	0.482486488942284\\
1.09543407902747	0.483308350708713\\
1.09550491384654	0.48413063456721\\
1.09557493902194	0.484953339048707\\
1.09564415487563	0.485776462682984\\
1.09571256173272	0.486600003998677\\
1.09578015992152	0.487423961523287\\
1.09584694977347	0.488248333783195\\
1.09591293162319	0.489073119303657\\
1.09597810580842	0.489898316608842\\
1.09604247267006	0.490723924221817\\
1.09610603255217	0.491549940664567\\
1.09616878580191	0.492376364458\\
1.0962307327696	0.493203194121974\\
1.09629187380868	0.494030428175287\\
1.09635220927568	0.494858065135687\\
1.0964117395303	0.495686103519915\\
1.09647046493531	0.496514541843667\\
1.0965283858566	0.497343378621639\\
1.09658550266315	0.498172612367522\\
1.09664181572706	0.499002241594027\\
1.09669732542351	0.499832264812868\\
1.09675203213074	0.500662680534807\\
1.09680593623011	0.501493487269637\\
1.09685903810604	0.502324683526202\\
1.09691133814601	0.503156267812408\\
1.09696283674058	0.503988238635238\\
1.09701353428335	0.50482059450075\\
1.097063431171	0.505653333914098\\
1.09711252780325	0.506486455379536\\
1.09716082458285	0.50731995740043\\
1.0972083219156	0.508153838479274\\
1.09725502021034	0.508988097117691\\
1.09730091987893	0.509822731816455\\
1.09734602133625	0.510657741075481\\
1.09739032500019	0.511493123393857\\
1.09743383129167	0.512328877269848\\
1.09747654063461	0.513165001200902\\
1.09751845345592	0.514001493683651\\
1.09755957018551	0.514838353213956\\
1.0975998912563	0.515675578286871\\
1.09763941710415	0.516513167396685\\
1.09767814816793	0.517351119036929\\
1.09771608488949	0.518189431700376\\
1.09775322771362	0.519028103879052\\
1.09778957708808	0.51986713406426\\
1.0978251334636	0.520706520746575\\
1.09785989729384	0.521546262415857\\
1.09789386903542	0.522386357561272\\
1.09792704914788	0.523226804671291\\
1.09795943809371	0.5240676022337\\
1.09799103633832	0.52490874873562\\
1.09802184435004	0.525750242663514\\
1.09805186260011	0.526592082503188\\
1.09808109156269	0.527434266739813\\
1.09810953171483	0.528276793857924\\
1.09813718353649	0.529119662341453\\
1.0981640475105	0.529962870673698\\
1.09819012412261	0.530806417337383\\
1.09821541386142	0.531650300814636\\
1.09823991721841	0.53249451958699\\
1.09826363468793	0.533339072135434\\
1.0982865667672	0.534183956940394\\
1.09830871395626	0.53502917248174\\
1.09833007675805	0.535874717238813\\
1.0983506556783	0.536720589690418\\
1.09837045122563	0.537566788314858\\
1.09838946391144	0.538413311589922\\
1.09840769424999	0.5392601579929\\
1.09842514275833	0.540107326000603\\
1.09844180995635	0.540954814089364\\
1.09845769636672	0.541802620735048\\
1.09847280251493	0.542650744413074\\
1.09848712892925	0.543499183598402\\
1.09850067614074	0.54434793676558\\
1.09851344468322	0.545197002388706\\
1.09852543509333	0.546046378941479\\
1.09853664791044	0.546896064897193\\
1.09854708367668	0.547746058728751\\
1.09855674293696	0.548596358908668\\
1.09856562623891	0.54944696390908\\
1.09857373413293	0.55029787220178\\
1.09858106717212	0.551149082258185\\
1.09858762591235	0.552000592549391\\
1.09859341091217	0.55285240154614\\
1.09859842273287	0.553704507718872\\
1.09860266193846	0.554556909537704\\
1.09860612909561	0.555409605472451\\
1.09860882477373	0.556262593992645\\
1.0986107495449	0.557115873567531\\
1.09861190398388	0.55796944266608\\
};
\addlegendentry{$\nu =(0.32,0.3,0.37)$}

\addplot [color=mycolor3, dotted, line width=1.5pt]
  table[row sep=crcr]{%
2.86505889391656e-08	0.162518904634483\\
3.65469773615981e-08	0.162518897840173\\
4.66129085617449e-08	0.162518889197021\\
5.94423839635973e-08	0.162518878204349\\
7.57913953468532e-08	0.162518864226639\\
9.66219024536649e-08	0.162518846457463\\
1.23157689122754e-07	0.162518823873842\\
1.56955285004834e-07	0.162518795178465\\
1.99993992625764e-07	0.162518758726686\\
2.54790134194962e-07	0.162518712434233\\
3.245419876383e-07	0.162518653660575\\
4.13313603000285e-07	0.162518579061742\\
5.26267848642994e-07	0.162518484404447\\
6.69961779510938e-07	0.162518364331741\\
8.52720861168807e-07	0.162518212067468\\
1.0851129320577e-06	0.162518019043917\\
1.38054826598924e-06	0.162517774432967\\
1.75603898705403e-06	0.162517464555984\\
2.23315977172377e-06	0.162517072141767\\
2.83926268127869e-06	0.162516575393927\\
3.60901267982223e-06	0.162515946819701\\
4.5863276145059e-06	0.162515151760304\\
5.82682803468682e-06	0.162514146548232\\
7.40092931790669e-06	0.162512876198896\\
9.39774250081341e-06	0.162511271521443\\
1.19299926799878e-05	0.162509245506278\\
1.51402169519669e-05	0.162506688812715\\
1.920857019082e-05	0.162503464138875\\
2.43626497332818e-05	0.162499399205016\\
3.08898532200609e-05	0.162494278019693\\
3.91529122884104e-05	0.162487830022831\\
4.96094045504682e-05	0.162479716608536\\
6.28362446577653e-05	0.162469514420377\\
7.95604012870764e-05	0.162456694679397\\
0.000100697391564577	0.16244059764644\\
0.000127399481234864	0.162420401131316\\
0.000161115984079002	0.162395081737096\\
0.000203668627513363	0.162363367263401\\
0.000257345656942318	0.162323678382632\\
0.000325019218227363	0.162274057342112\\
0.000410291620471864	0.162212081027801\\
0.000517677382106912	0.162134755246082\\
0.000652829552274251	0.1620383865343\\
0.000822820736955248	0.161918427194183\\
0.00103649161711281	0.161769288551142\\
0.00130488261023672	0.161584116674869\\
0.00164176779960196	0.161354523952199\\
0.00206431445943607	0.161070268985133\\
0.00259389658453962	0.160718876305935\\
0.00325709696260916	0.160285186382945\\
0.00408693971104925	0.159750825389894\\
0.00512440407697848	0.159093583339127\\
0.00642028094459525	0.158286688644732\\
0.00803744621228526	0.157297967363242\\
0.0100536403033489	0.156088876908105\\
0.0125648608198707	0.154613408027743\\
0.0156894958285231	0.152816857014234\\
0.0195733481502406	0.150634485225015\\
0.0243957250813337	0.147990109395243\\
0.0303767901582733	0.144794710750472\\
0.0377863873641305	0.140945224479354\\
0.0469545405508149	0.136323790989146\\
0.0582837770863947	0.130797943880449\\
0.0722632794989918	0.124222519297364\\
0.089484551776737	0.116444561647057\\
0.110657660274926	0.107313264616277\\
0.136625946778179	0.0966981473602692\\
0.16837506326005	0.0845203614137885\\
0.207028751452112	0.0708043413864201\\
0.253818407673766	0.0557598114396358\\
0.310005751367734	0.0399066711813004\\
0.376728454556882	0.0242553390150503\\
0.454730703754939	0.0105479618997305\\
0.543943336830917	0.00154305121085232\\
0.642909159744182	0.00127661788548695\\
0.748132826165357	0.0151519186412888\\
0.853583679141897	0.0496211371803508\\
0.855640061796641	0.0505667816882182\\
0.857693106255408	0.0515233751297522\\
0.859742734969961	0.0524909668536775\\
0.861788870137103	0.0534696059472123\\
0.863831433704067	0.0544593412293389\\
0.865870347373982	0.055460221244049\\
0.867905532611434	0.056472294253563\\
0.869936910648089	0.0574956082315238\\
0.871964402488417	0.0585302108561666\\
0.873987928915474	0.0595761495034681\\
0.876007410496784	0.0606334712402714\\
0.878022767590289	0.0617022228173917\\
0.880033920350379	0.0627824506627027\\
0.882040788734004	0.0638742008742048\\
0.884043292506864	0.0649775192130756\\
0.886041351249673	0.0660924510967048\\
0.888034884364507	0.0672190415917141\\
0.890023811081221	0.0683573354069642\\
0.892008050463948	0.0695073768865485\\
0.893987521417677	0.0706692100027772\\
0.895962142694894	0.0718428783491491\\
0.89793183290231	0.0730284251333185\\
0.899896510507662	0.0742258931700506\\
0.901856093846575	0.0754353248741741\\
0.903810501129516	0.0766567622535287\\
0.905759650448803	0.0778902469019076\\
0.907703459785699	0.0791358199919999\\
0.909641847017563	0.0803935222683312\\
0.911574729925085	0.0816633940402067\\
0.913502026199579	0.0829454751746544\\
0.915423653450353	0.0842398050893735\\
0.917339529212139	0.0855464227456866\\
0.919249570952591	0.0868653666414995\\
0.921153696079856	0.0881966748042697\\
0.923051821950203	0.0895403847839813\\
0.924943865875712	0.0908965336461345\\
0.926829745132039	0.0922651579647445\\
0.928709376966227	0.093646293815356\\
0.930582678604593	0.0950399767680737\\
0.932449567260662	0.096446241880607\\
0.934309960143169	0.0978651236913377\\
0.936163774464111	0.099296656212402\\
0.938010927446864	0.1007408729228\\
0.939851336334346	0.102197806761524\\
0.941684918397241	0.103667490120714\\
0.943511590942274	0.10514995483884\\
0.945331271320535	0.106645232193909\\
0.947143876935858	0.108153352896706\\
0.948949325253247	0.109674347084061\\
0.950747533807345	0.111208244312157\\
0.952538420210961	0.112755073549862\\
0.954321902163629	0.114314863172105\\
0.956097897460221	0.115887640953289\\
0.957866323999596	0.117473434060739\\
0.959627099793295	0.119072269048194\\
0.961380142974267	0.120684171849343\\
0.963125371805645	0.122309167771402\\
0.964862704689549	0.123947281488742\\
0.966592060175926	0.125598537036558\\
0.968313356971427	0.127262957804594\\
0.970026513948311	0.128940566530914\\
0.971731450153384	0.130631385295731\\
0.97342808481696	0.132335435515282\\
0.975116337361859	0.134052737935771\\
0.97679612741242	0.135783312627357\\
0.978467374803543	0.137527178978214\\
0.980129999589754	0.139284355688638\\
0.981783922054287	0.141054860765231\\
0.983429062718187	0.142838711515142\\
0.985065342349429	0.14463592454037\\
0.98669268197205	0.146446515732147\\
0.988311002875298	0.148270500265378\\
0.989920226622791	0.150107892593159\\
0.991520275061686	0.151958706441366\\
0.99311107033185	0.153822954803318\\
0.99469253487505	0.155700649934516\\
0.996264591444133	0.157591803347463\\
0.997827163112218	0.159496425806553\\
0.999380173281882	0.161414527323055\\
1.00092354569435	0.163346117150168\\
1.00245720443867	0.165291203778165\\
1.00398107396092	0.167249794929621\\
1.00549507907333	0.169221897554733\\
1.00699914496348	0.171207517826716\\
1.00849319720345	0.173206661137307\\
1.00997716175895	0.175219332092342\\
1.01145096499842	0.177245534507442\\
1.01291453370218	0.179285271403785\\
1.01436779507146	0.181338545003975\\
1.01581067673752	0.183405356728011\\
1.01724310677065	0.185485707189351\\
1.0186650136892	0.187579596191084\\
1.02007632646857	0.189687022722194\\
1.02147697455018	0.191807984953937\\
1.02286688785036	0.193942480236312\\
1.02424599676927	0.19609050509465\\
1.02561423219977	0.198252055226298\\
1.02697152553622	0.200427125497423\\
1.02831780868326	0.202615709939918\\
1.02965301406457	0.204817801748419\\
1.03097707463155	0.207033393277441\\
1.03228992387195	0.209262476038622\\
1.03359149581853	0.211505040698083\\
1.03488172505756	0.213761077073901\\
1.03616054673734	0.21603057413371\\
1.03742789657665	0.218313519992401\\
1.03868371087313	0.220609901909962\\
1.03992792651163	0.222919706289424\\
1.04116048097246	0.225242918674935\\
1.04238131233963	0.227579523749951\\
1.04359035930897	0.229929505335557\\
1.04478756119622	0.232292846388907\\
1.04597285794505	0.234669529001791\\
1.04714619013501	0.237059534399327\\
1.04830749898937	0.239462842938782\\
1.04945672638294	0.241879434108519\\
1.05059381484982	0.244309286527074\\
1.05171870759098	0.246752377942355\\
1.05283134848186	0.249208685230985\\
1.05393168207987	0.251678184397762\\
1.05501965363176	0.254160850575259\\
1.05609520908094	0.256656658023549\\
1.05715829507471	0.259165580130068\\
1.05820885897141	0.26168758940961\\
1.05924684884746	0.264222657504453\\
1.06027221350429	0.266770755184623\\
1.06128490247524	0.269331852348285\\
1.06228486603228	0.271905918022279\\
1.06327205519274	0.274492920362785\\
1.06424642172581	0.277092826656123\\
1.06520791815903	0.279705603319691\\
1.06615649778467	0.28233121590304\\
1.06709211466597	0.284969629089083\\
1.06801472364328	0.287620806695442\\
1.06892428034014	0.290284711675932\\
1.06982074116916	0.292961306122179\\
1.07070406333791	0.295650551265379\\
1.07157420485458	0.298352407478188\\
1.07243112453358	0.301066834276761\\
1.07327478200105	0.303793790322908\\
1.07410513770021	0.306533233426409\\
1.07492215289659	0.309285120547447\\
1.07572578968319	0.312049407799191\\
1.07651601098545	0.314826050450505\\
1.07729278056616	0.317615002928801\\
1.0780560630302	0.320416218823023\\
1.0788058238292	0.323229650886767\\
1.07954202926603	0.326055251041541\\
1.08026464649917	0.328892970380148\\
1.08097364354702	0.331742759170217\\
1.08166898929195	0.334604566857861\\
1.08235065348433	0.337478342071465\\
1.08301860674641	0.340364032625614\\
1.08367282057601	0.34326158552515\\
1.08431326735015	0.346170946969352\\
1.08493992032851	0.349092062356269\\
1.08555275365671	0.352024876287153\\
1.08615174236956	0.354969332571052\\
1.08673686239408	0.357925374229503\\
1.08730809055245	0.360892943501377\\
1.08786540456471	0.363871981847833\\
1.0884087830515	0.36686242995741\\
1.08893820553647	0.369864227751237\\
1.0894536524487	0.372877314388369\\
1.08995510512488	0.375901628271244\\
1.09044254581139	0.378937107051274\\
1.09091595766625	0.381983687634533\\
1.0913753247609	0.385041306187586\\
1.09182063208182	0.388109898143437\\
1.09225186553211	0.39118939820757\\
1.09266901193275	0.394279740364138\\
1.09307205902389	0.397380857882246\\
1.0934609954659	0.400492683322354\\
1.09383581084029	0.403615148542793\\
1.0941964956505	0.406748184706392\\
1.09454304132253	0.409891722287211\\
1.09487544020545	0.413045691077391\\
1.09519368557174	0.416210020194097\\
1.09549777161748	0.419384638086581\\
1.09578769346242	0.422569472543336\\
1.09606344714991	0.425764450699366\\
1.09632502964665	0.428969499043538\\
1.09657243884229	0.432184543426055\\
1.09680567354898	0.435409509066008\\
1.09702473350063	0.438644320559032\\
1.09722961935216	0.441888901885059\\
1.0974203326785	0.445143176416157\\
1.09759687597356	0.448407066924459\\
1.09775925264891	0.451680495590193\\
1.09790746703248	0.454963384009783\\
1.09804152436699	0.458255653204043\\
1.09816143080831	0.46155722362646\\
1.09826719342364	0.464868015171547\\
1.09835882018957	0.468187947183288\\
1.09843631999004	0.471516938463646\\
1.09849970261405	0.474854907281164\\
1.09854897875335	0.478201771379632\\
1.09858415999993	0.481557447986817\\
1.09860525884337	0.484921853823286\\
};
\addlegendentry{$\nu = (0.3,0.2,0.5)$}

\end{axis}
\end{tikzpicture}%
\caption{Rate function $J(t)$ of $\{\frac{1}{n} g_\nu(X^n)\}$, for a distribution over three symbols $\mu = (0.05,0.1,0.85)$.}
\end{figure}

\begin{proof}[Proof of Theorem~\ref{thm:mism_ldp}]
Observe that by Lemma~\ref{lem:mism_same_tilted_family}, for any $\nu* \in \mathcal{T}_\nu^+$ we have $G_{\nu*}(x) = G_{\nu}(x)$ for all $x \in \mathcal{X}$. 
In particular, this holds for $\nu* = \Pi_{\mathcal{T}_\nu}(\mu)$. Therefore, without loss of generality throughout the proof we assume that 
$\nu = \Pi_{\mathcal{T}_\nu}(\mu)$.

Next, note that as $\frac{1}{n}g_{\nu^n}(X^n)$ 
takes values in a compact subset $[0,\log |\mathcal{X}|]$ of $\mathbb{R}$, it is sufficient to prove that the limit below exists and evaluates to the rate function~(see \cite[Section V]{Beirami-IT} for a formal discussion), i.e., 
	\begin{align}
\lim_{\epsilon \downarrow 0} \lim_{n \to \infty} \frac{1}{n}\log  \mathbb{P}_\mu^n\left(  \left|\frac{1}{n} g_{\nu^n}(X^n) - t \right| < \epsilon \right)=  -J(t).
	\end{align}
We proceed with the proof in three separate cases. 

{\it Case (a)}:  We let $t \in (H(\nu),\log |\mathcal{X}|)$, which implies $\alpha(t) \in (0,1)$ by monotonicity of $H(T(\nu,\alpha))$ for non-negative $\alpha$. Note that
\eqref{eq:D_set} and \eqref{eq:B_set} respecitvely imply
\begin{align}
   & \lim_{\epsilon \downarrow 0} \limsup_{n \to \infty} \frac{1}{n} \log \mathbb{P}_{\mu^n}\left( \left| \frac{1}{n} g_\nu(X^n) - H(T(\nu,\alpha(t))) \right| \leq \epsilon \right) \notag \\
   & \hspace{.5em} \leq \lim_{\epsilon \downarrow 0} \limsup_{n \to \infty} \frac{1}{n} \log \mathbb{P}_{\mu^n}(\mathbf{q}_{X^n} \in \mathcal{D}(\nu,\alpha(t),2\epsilon/\alpha(t))), \label{eq:ldp_upb}\\
   & \lim_{\epsilon \downarrow 0} \liminf_{n \to \infty} \frac{1}{n} \log \mathbb{P}_{\mu^n}\left( \left| \frac{1}{n} g_\nu(X^n) - H(T(\nu,\alpha(t))) \right| \leq \epsilon \right) \notag \\
   & \hspace{.5em} \geq \lim_{\epsilon \downarrow 0} \liminf_{n \to \infty} \frac{1}{n} \log \mathbb{P}_{\mu^n}(\mathbf{q}_{X^n} \in  \mathcal{B}(\nu,\alpha(t),\epsilon/\alpha(t))). \label{eq:ldp_lbd}
\end{align}
Thus, it suffices to show that the RHS of \eqref{eq:ldp_upb} and \eqref{eq:ldp_lbd} both evaluate to $-D(\gamma_{\nu, \mu}(t) \| \mu)$. This is done via Sanov's Theorem. Recall that Sanov's Theorem~\cite[Theorem~6.2.10]{Dembo} states that, for a set of distributions $\mathcal{C}$, 
    	\begin{align}
	    - \inf_{\gamma \in \mathrm{int} \mathcal{C}} D(\gamma \| \mu) &\leq \liminf_{n \to \infty} \frac{1}{n} \log \mathbb{P}(\mathbf{q}_{x^n} \in  \mathcal{C}) \notag \\
	    &\leq \limsup_{n \to \infty} \frac{1}{n} \log \mathbb{P}(\mathbf{q}_{x^n} \in \mathcal{C}) \notag \\
	    & \leq - \inf_{\gamma \in \mathrm{cl} \mathcal{C}} D(\gamma \| \mu).
	\end{align}
	To obtain the upper bound, we apply this result to the set $\mathcal{D}(\nu,\alpha(t),2\epsilon/\alpha(t))$. Observing that this holds for any $\epsilon$, and then letting $\epsilon \downarrow 0$, we get that the RHS of \eqref{eq:ldp_upb} is upper bounded 
	\begin{align}
	    - \lim_{\epsilon \downarrow 0} \inf_{\gamma \in \mathrm{cl} \mathcal{D}(\nu,\alpha(t),2\epsilon/\alpha(t))} D(\gamma \| \mu). \label{eq:lim_eps_to_0}
	\end{align}
	We now make use of a basic topological fact. Observe that $D(\gamma \| \mu)$ is strictly convex in $\gamma$ for a fixed $\mu$, and thus there is a unique minimizer $\gamma(t,\epsilon)$. Noting that the minimizer $\gamma(\epsilon,t)$ is in the set $\mathrm{cl}\mathcal{D}(\nu,\alpha(t),2\epsilon/\alpha(t))$, by continuity of $D(\gamma \| \mu)$ and compactness of the set. Thus, the collection of minimizers $\gamma(t,\epsilon)$ is a collection of points such that $\gamma(t,\epsilon) \in \mathrm{cl}\mathcal{D}(\nu,\alpha(t),2\epsilon/\alpha(t))$. It follows from compactness that the limit point $\lim_{\epsilon \downarrow 0} \gamma(\epsilon,t) \in \bigcap_{\epsilon > 0} \mathrm{cl}\mathcal{D}(\nu,\alpha(t),2\epsilon/\alpha(t)) = \mathrm{cl} \mathcal{D}(\nu,\alpha(t),0)$, where we have used that $\alpha(t) > 0$. Therefore, we have the bound
	\begin{align}
	    & \inf_{\gamma \in \Delta_{\mathcal{X}}} \hspace{3em} D(\gamma \| \mu) \notag \\
	    & \text{subject to } \hspace{1.5em} H(\gamma\|\nu) \leq H(T(\nu,\alpha(t))\| \nu)
	\end{align}
	Note that this optimization problem is convex, and thus can be solved analytically by writing the KKT conditions~\cite{boyd2004convex}, which give a solution $\gamma_{\nu,\mu}(t) \in \mathcal{T}_{\nu,\mu}$, and optimal value $D(\gamma_{\nu,\mu}(t)\| \mu)$.
	
	Analogously, the RHS of \eqref{eq:B_set} can be shown to be lower bounded by $- \inf D(\gamma \| \mu)$, where $\gamma \in \mathcal{B}(\nu,\alpha(t),0)$, by noting $\mathcal{B}(\nu,\alpha,0) \subset \mathrm{int} \mathcal{B}(\nu,\alpha,\epsilon)$, for any $\epsilon > 0$. Again, this optimization can be solved analytically, and gives the desired output. Putting these results together, we get that
	\begin{align}
\lim_{\epsilon \downarrow 0} \lim_{n \to \infty} \frac{1}{n}\log  &\mathbb{P}_\mu^n\left(  \left|\frac{1}{n} g_{\nu^n}(X^n) - H(T(\nu,\alpha(t)))\right| < \epsilon \right) \notag \\ 
&= -D(\gamma_{\nu,\mu}(t) \| \mu).
	\end{align}
	
	{\it Case (b):} We now let $t \in (0,H(\nu))$, which implies $\alpha(t) \in (1,\infty)$. The proof in this case follows from the same step as in Case (a), by replacing the set $\mathcal{D}(\nu,\alpha(t),\epsilon)$ with the set $\mathcal{E}(\nu,\alpha(t),\epsilon)$.
	
	{\it Case (c):} Finally, let $t = H(\nu)$, or equivalently, $\alpha(t) = 1$. In this case, note that $\mu \in \mathcal{B}(\nu,1,\epsilon)$, and thus, by the law of large numbers and \eqref{eq:ldp_lbd}, we have that
\begin{align}
    \lim_{\epsilon \downarrow 0} \lim_{n \to \infty} \frac{1}{n}\log  \mathbb{P}_\mu^n\left(  \left|\frac{1}{n} g_{\nu^n}(X^n) - t \right| < \epsilon \right) \geq 0,
\end{align}	
	which implies that $J(t) = 0$ in this case.
	\end{proof}
	
	As mentioned before, an attractive feature of the LDP is that it implies the asymptotic average growth rate of the $\rho$-th moment of the mismatched guesswork, i.e., $E_\rho(\mu \| \nu)$. This is formalized in the following corollary, which is the second main result of this paper implied by Theorem~\ref{thm:mism_ldp}.
	
	\begin{corollary}\label{cor:mism_moments}
	Let $\Pi_{\mathcal{T}_\nu}(\mu) \in \mathcal{T}_\nu^+$. Then, we have
	\begin{align}
	    E_\rho(\nu \| \mu)  & = \max_{\gamma \in \mathcal{T}_{\nu,\mu} } \quad H(\Pi_{T_\nu}(\gamma)) - \frac{1}{\rho} D(\gamma \| \mu) \label{eq:first_form}
	\end{align}
	\end{corollary}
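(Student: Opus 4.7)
The plan is to derive the corollary from Theorem~\ref{thm:mism_ldp} by means of Varadhan's lemma, and then translate the resulting scalar variational formula into the geometric form in \eqref{eq:first_form} using the identification $t \leftrightarrow \gamma \in \mathcal{T}_{\nu,\mu}$ induced by the tilted/linear family machinery of Section~\ref{sec:background}.

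First I would check the hypotheses of Varadhan's lemma. Setting $Z_n \triangleq \tfrac{1}{n}g_{\nu^n}(X^n)$, we have $Z_n \in [0,\log|\mathcal{X}|]$ almost surely, so $t \mapsto \rho t$ is bounded and continuous on the (compact) support of the laws of $Z_n$, and the (trivial) tail condition for Varadhan is satisfied. Applying \cite[Theorem~4.3.1]{Dembo} to the LDP of Theorem~\ref{thm:mism_ldp} then gives
\begin{align}
\lim_{n\to\infty}\frac{1}{n}\log \mathbb{E}_{\mu^n}\!\left[G_{\nu^n}(X^n)^{\rho}\right] \;=\; \sup_{t\in[0,\log|\mathcal{X}|]}\bigl\{\rho t - J(t)\bigr\},
\end{align}
so that $E_\rho(\nu\|\mu) = \sup_{t}\bigl\{t - \tfrac{1}{\rho}D(\gamma_{\nu,\mu}(t)\|\mu)\bigr\}$.

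Next I would make the reparameterization. By \eqref{eq:gamma_t} the optimizer $\gamma = \gamma_{\nu,\mu}(t)$ satisfies $\gamma \in \mathcal{T}_{\nu,\mu}\cap\mathcal{L}(\nu,\alpha(t))$, hence $H(\gamma\|\nu) = H(T(\nu,\alpha(t))\|\nu)$. By Definition~\ref{def:projection} together with Lemma~\ref{lem:existence_uniqueness}, the projection $\Pi_{\mathcal{T}_\nu}(\gamma)$ is the unique element of $\mathcal{T}_\nu\cap\mathcal{L}(\nu,\alpha(t))$, which is exactly $T(\nu,\alpha(t))$. In particular $H(\Pi_{\mathcal{T}_\nu}(\gamma)) = H(T(\nu,\alpha(t))) = t$ by the defining equation \eqref{eq:alpha_t} of $\alpha(t)$. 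Substituting $t = H(\Pi_{\mathcal{T}_\nu}(\gamma))$ into the scalar supremum yields the objective appearing in \eqref{eq:first_form}. Finally I would argue that as $t$ traverses $(0,\log|\mathcal{X}|]$ with $\alpha(t)\in[0,\infty)$, the map $t\mapsto \gamma_{\nu,\mu}(t)$ is a bijection onto $\mathcal{T}_{\nu,\mu}^+\cup\{\mu\}$ (by the strict monotonicity of $H(T(\nu,\alpha)\|\nu)$ recalled in the proof of Lemma~\ref{lem:existence_uniqueness}), and that the hypothesis $\Pi_{\mathcal{T}_\nu}(\mu)\in\mathcal{T}_\nu^+$ ensures the supremum is attained in this branch, so that the $\sup$ over $t$ coincides with the $\max$ over $\gamma\in\mathcal{T}_{\nu,\mu}$.

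The main obstacle is essentially geometric bookkeeping: verifying that the identifications $t \leftrightarrow \alpha(t) \leftrightarrow \gamma_{\nu,\mu}(t)$ are one-to-one on the relevant branch, and that the optimizer does not escape to the boundary where $\alpha(t)\in\{0,\infty\}$ (which would correspond to $\gamma = \mu$ or a deterministic distribution and can be handled by continuity of the objective). No new analytic estimates are required beyond the LDP of Theorem~\ref{thm:mism_ldp} and the I-projection properties (Lemmas~\ref{thm:pythagore}, \ref{lem:entr_ineq}, and \ref{cor:div_ineq}) already in hand.
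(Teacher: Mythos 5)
Your proposal is correct and follows essentially the same route as the paper: Varadhan's lemma applied to the LDP of Theorem~\ref{thm:mism_ldp} with $F(t)=\rho t$, followed by the change of variables $t \leftrightarrow \gamma_{\nu,\mu}(t)$ via \eqref{eq:alpha_t} and \eqref{eq:gamma_t}, identifying $t = H(\Pi_{\mathcal{T}_\nu}(\gamma))$. Your version merely spells out the bijection and boundary checks that the paper leaves implicit.
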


\begin{figure}
	\centering
\begin{tikzpicture}

\begin{axis}[%
width=2.3in,
height=2.3in,
at={(1.273in,0.674in)},
scale only axis,
xmin=0,
xmax=9,
xlabel style={font=\color{white!15!black}},
xlabel={$\rho$},
ymin=0.5,
ymax=1.0986,
ylabel style={font=\color{white!15!black}},
ylabel={$E_\rho(\nu \| \mu)$},
axis background/.style={fill=white},
legend style={at={(0.323,0.04)}, anchor=south west, legend cell align=left, align=left, draw=white!15!black}
]

\addplot [color=mycolor2, line width=1.0pt]
  table[row sep=crcr]{%
0.1	0.552920880484997\\
0.6	0.686882864982004\\
1.1	0.772937820784424\\
1.6	0.830913995455473\\
2.1	0.87134384166833\\
2.6	0.901584326806767\\
3.1	0.923324015453998\\
3.6	0.943383679451936\\
4.1	0.958550742474767\\
4.6	0.970420617883939\\
5.1	0.979963066742293\\
5.6	0.987801506875941\\
6.1	0.996044578226693\\
6.6	1.00324355723149\\
7.1	1.00942859553139\\
7.6	1.01479981300236\\
8.1	1.01950791720531\\
8.6	1.02366856743116\\
9.1	1.02737200334649\\
9.6	1.0306896646873\\
};
\addlegendentry{$\nu = \mu$}

\addplot [color=mycolor3, dashed, line width=1.0pt]
  table[row sep=crcr]{%
0.1	0.601751721850402\\
1.1	0.823952524394913\\
2.1	0.912551430536823\\
3.1	0.954985305395276\\
4.1	0.986122305569564\\
5.1	1.00504871744021\\
6.1	1.01776974836966\\
7.1	1.0277643756821\\
8.1	1.03602570360766\\
9.1	1.04247135506606\\
};
\addlegendentry{$\nu =(0.32,0.3,0.37)$}

\addplot [color=mycolor1, dotted, line width=1.5pt]
  table[row sep=crcr]{%
0.1	0.630142980889312\\
0.6	0.776224012112979\\
1.1	0.85357813822182\\
1.6	0.901706697683335\\
2.1	0.934277573780422\\
2.6	0.957705666792946\\
3.1	0.975332917768435\\
3.6	0.989066390863711\\
4.1	1.00006077545673\\
4.6	1.00905871366116\\
5.1	1.01655541646673\\
5.6	1.02289763484482\\
6.1	1.02833188737549\\
6.6	1.03303964135799\\
7.1	1.03715726614613\\
7.6	1.04079009872694\\
8.1	1.04401745431499\\
8.6	1.04690381682369\\
9.1	1.04949999491734\\
9.6	1.05184984738489\\
};
\addlegendentry{$\nu = (0.3,0.2,0.5)$}

\end{axis}
\end{tikzpicture}%
\caption{Illustration of Corollary~\ref{cor:mism_moments}. The distributions are identical as in Figure~1. Note that, as $\rho$ grows, the curves meet at $\log |\mathcal{X}|$.}
\end{figure}
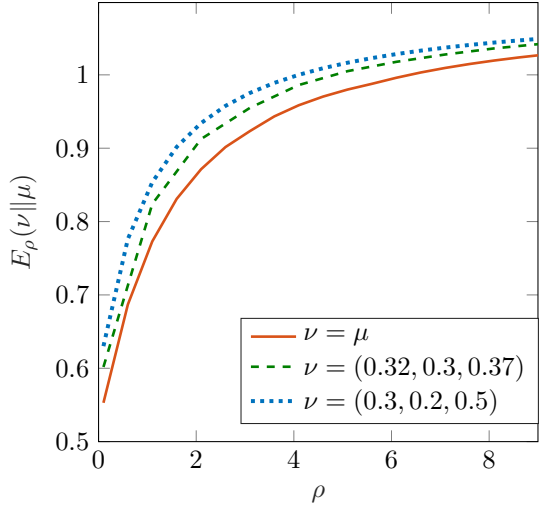	
	\begin{proof}
	    We use Varadhan's Lemma~\cite[Theorem 4.3.1]{Dembo}, which states that if a sequence of random variables $M_n$ satisfies a LDP with rate function $J(t)$, then we have
	    \begin{align}
	    \lim_{n \to \infty} \frac{1}{n} \log \mathbb{E}_{\mu^n}[\exp n F(M_n)] = \sup_{t} F(t) - J(t),
	    \end{align}
	    for any continuous and bounded function $F$. Applying this results to the sequence $\{\frac{1}{n}g_\nu(X^n)\}$, and letting $F(t) = \rho \cdot t$, for $\rho > 0$ and $t \in [0,\log|\mathcal{X}|]$ thus yields
	     \begin{align}
	    \lim_{n \to \infty} \frac{1}{n} \log \mathbb{E}_{\mu^n}[G_\nu^\rho(X^n)] = \sup_{t} \rho\cdot t - D(\gamma_{\nu, \mu}(t) \| \mu).
	    \end{align}
	    Performing the optimization on $\gamma$ instead of $t$, via the change of variables in \eqref{eq:alpha_t} and \eqref{eq:gamma_t} concludes the proof.
	\end{proof}

The following is an immediate corollary which lower bounds the mismatched guesswork.
\begin{corollary}[non-negativity of mismatch penalty]
Let $\Pi_{\mathcal{T}_\nu}(\mu) \in \mathcal{T}_\nu^+$, then the following holds:
\begin{align}
  E_\rho(\nu \| \mu) \geq E_\rho(\mu) = H_{\frac{1}{1+\rho}}(\mu),
\end{align}
with equality iff $\mu \in \mathcal{T}_{\nu}^+$.
\end{corollary}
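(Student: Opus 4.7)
The plan is to reduce both $\rho E_\rho(\mu)$ and $\rho E_\rho(\nu\|\mu)$ to minimizations of the common Ar{\i}kan-type objective
\[
A(\phi)\,\triangleq\,\log\sum_{x\in\mathcal{X}}\mu(x)\,\phi(x)^{-\rho},
\]
over $\Delta_\mathcal{X}$ in the matched case and over $\mathcal{T}_\nu$ in the mismatched case, after which both the inequality and the equality characterization reduce to the observation that restricting the domain of a strictly convex minimization either preserves or strictly raises the minimum, depending on whether the unconstrained minimizer lies in the constraint set.

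Starting from Corollary~\ref{cor:mism_moments} I would parametrize the optimizer as $\gamma^*=T(\nu,\mu,\alpha^*)\in\mathcal{T}_{\nu,\mu}$ and $\phi^*=\Pi_{\mathcal{T}_\nu}(\gamma^*)=T(\nu,\beta^*)\in\mathcal{T}_\nu^+$ (which is in the interior by the hypothesis $\Pi_{\mathcal{T}_\nu}(\mu)\in\mathcal{T}_\nu^+$), and differentiate the variational objective $\rho H(\phi(\alpha))-D(\gamma(\alpha)\|\mu)$ along the curve $\mathcal{T}_{\nu,\mu}$. Using $H(T(\nu,\beta))=\psi(\beta)-\beta\psi'(\beta)$ with $\psi(\beta)=\log\sum_x\nu(x)^\beta$, and the projection identity $\psi'(\beta^*)=E_{\phi^*}[\log\nu]=E_{\gamma^*}[\log\nu]$ (the shared linear family $\mathcal{L}(\nu,\beta^*)$), the first-order condition collapses to $\alpha^*=-\rho\beta^*$. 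Substituting this back yields $\gamma^*(x)\propto\mu(x)\phi^*(x)^{-\rho}$, and the normalizers of $\gamma^*$ and $\phi^*$ combine with the identity $E_{\gamma^*}[\log\phi^*]=E_{\phi^*}[\log\phi^*]=-H(\phi^*)$ (again from the shared linear family) to produce the clean rewrite
\[
\rho E_\rho(\nu\|\mu)=A(\phi^*)=\min_{\phi\in\mathcal{T}_\nu}A(\phi),
\]
where the last equality follows because $\phi^*$ is also the unique minimizer of $A$ restricted to $\mathcal{T}_\nu$ (same FOC, strict convexity in the tilt parameter).

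The same derivation applied in the matched case $\nu=\mu$ (or directly via Corollary~\ref{lem:guesswork_matched}) gives $\rho E_\rho(\mu)=\min_{\phi\in\Delta_\mathcal{X}}A(\phi)=\rho H_{1/(1+\rho)}(\mu)$, attained uniquely at $\phi^\dagger=T(\mu,1/(1+\rho))$. Since $\mathcal{T}_\nu\subset\Delta_\mathcal{X}$, the restricted minimum is at least the unrestricted one, giving $E_\rho(\nu\|\mu)\ge E_\rho(\mu)$. Equality forces the unique unconstrained minimizer $\phi^\dagger$ to lie in $\mathcal{T}_\nu$, that is, $T(\mu,1/(1+\rho))\in\mathcal{T}_\nu$. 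Since $\mu$ is unambiguous, $\phi^\dagger\neq\mathbf{u}_\mathcal{X}$, so the exponential families $\mathcal{T}_\mu$ and $\mathcal{T}_\nu$ share the two distinct distributions $\mathbf{u}_\mathcal{X}$ and $\phi^\dagger$ and must coincide; in particular $\mu\in\mathcal{T}_\nu$, and the standing hypothesis $\Pi_{\mathcal{T}_\nu}(\mu)\in\mathcal{T}_\nu^+$ then upgrades this to $\mu\in\mathcal{T}_\nu^+$. Conversely, if $\mu\in\mathcal{T}_\nu^+$ then $\mu=T(\nu,\alpha_0)$ with $\alpha_0>0$ tensorizes to $\mu^n=T(\nu^n,\alpha_0)\in\mathcal{T}_{\nu^n}^+$, and Lemma~\ref{lem:mism_same_tilted_family} gives $G_{\nu^n}(x^n)=G_{\mu^n}(x^n)$ pointwise, so the two $\rho$-th moments are exactly equal at every $n$.

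The main obstacle is the algebraic collapse of the variational formula to $A(\phi^*)$: it requires simultaneously invoking the FOC $\alpha^*=-\rho\beta^*$ and the shared-linear-family identities to cancel the $\rho H(\phi^*)$ term in the objective against the corresponding piece hidden in $D(\gamma^*\|\mu)$, leaving only the log-partition function. Once this rewrite is in hand, the rest of the corollary is driven by the uniqueness of the matched minimizer and the rigidity that two exponential families coinciding at two distinct points must coincide entirely.
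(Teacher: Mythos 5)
Your proposal is correct, but it takes a genuinely different route from the paper. The paper argues directly at the level of the two variational formulas: it starts from $E_\rho(\mu)=\max_{\phi\in\mathcal{T}_\mu}\{H(\phi)-\tfrac{1}{\rho}D(\phi\|\mu)\}$, uses Lemma~\ref{lem:entr_ineq} ($H(\Pi_{\mathcal{T}_\nu}(\phi))\geq H(\phi)$, with equality iff $\phi$'s family situation is degenerate) to replace the entropy term, then replaces $D(\phi\|\mu)$ by the smaller divergence $D(\gamma\|\mu)$ of the I-projection $\gamma\in\mathcal{T}_{\nu,\mu}$ on the same linear family $\mathcal{L}(\nu,\cdot)$, and finally recognizes the resulting program as \eqref{eq:first_form}; the equality case is inherited from the equality conditions of the projection lemmas together with Lemma~\ref{lem:mism_same_tilted_family}. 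You instead collapse both Corollary~\ref{lem:guesswork_matched} and Corollary~\ref{cor:mism_moments} to minimizations of the single Ar{\i}kan-type functional $A(\phi)=\log\sum_x\mu(x)\phi(x)^{-\rho}$ over nested sets $\mathcal{T}_\nu\subset\Delta_\mathcal{X}$, so the inequality is pure set inclusion and the equality case follows from uniqueness of the unconstrained minimizer $T(\mu,1/(1+\rho))$ plus the rigidity of exponential families through $\mathbf{u}_\mathcal{X}$; sufficiency via tensorization and Lemma~\ref{lem:mism_same_tilted_family} is exactly the paper's earlier remark. Your Legendre-duality computation does check out (writing $A(T(\nu,\beta))=\rho\psi(\beta)+\Lambda(-\rho\beta)$ with $\Lambda(s)=\log\sum_x\mu(x)\nu(x)^s$, the FOCs of the two problems coincide and the shared-linear-family identity cancels the entropy term), and it buys a cleaner, more quantitative equality analysis than the paper's somewhat informal projection argument. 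Two points deserve explicit care in a full write-up: justify that the maximum in \eqref{eq:first_form} is attained at an interior critical point with $\beta^*>0$ (concavity of $H(T(\nu,\beta(c)))$ in $c=E_\gamma[\log\nu]$ plus the hypothesis $\Pi_{\mathcal{T}_\nu}(\mu)\in\mathcal{T}_\nu^+$ gives this), since your collapse to $A(\phi^*)$ goes through the FOC; and apply strict convexity to $\sum_x\mu(x)\phi(x)^{-\rho}$ rather than to $A$ itself, since the logarithm of a convex function need not be convex, though monotonicity of $\log$ preserves the minimizer and its uniqueness.
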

\begin{proof}
Consider the optimization from \eqref{eq:regular_guesswork}, and notice that it can be equivalently written as
\begin{align}
    \max_{\phi \in \mathcal{T}_{\mu}} \; H(\phi) - \frac{1}{\rho} D(\phi\| \mu)
\end{align}
Using Lemma~\ref{lem:entr_ineq}, we obtain that $H(\Pi_{\mathcal{T}_\nu} (\zeta)) \geq H(\zeta))$, giving the upper bound
\begin{align}
     \max_{\phi \in \mathcal{T}_{\mu}} \; H(\Pi_{\mathcal{T}_\nu} (\phi)) - \frac{1}{\rho} D(\phi\| \mu).
\end{align}
Next, notice that since $H(\Pi_{\mathcal{T}_\nu} (\phi)\| \nu) = H(\phi \| \nu)$, by definition of $\Pi_{\mathcal{T}_\nu}$, it must be the case that $D(\gamma \| \mu) < D(\phi \| \mu)$ for some $\gamma \in \mathcal{T}_{\nu,\mu}$ which satisfies $H(\gamma \| \nu) = H(\Pi_{\mathcal{T}_\nu} (\phi)\| \nu)$. It follows that
\begin{align}
     \max_{\phi \in \mathcal{T}_{\mu}} & \quad H(\Pi_{\mathcal{T}_\nu} (\phi)) - \frac{1}{\rho} D(\gamma \| \mu) \\
     \text{such that} & \quad H(\gamma \| \nu) = H(\Pi_{\mathcal{T}_\nu} (\phi)\| \nu)
\end{align}
is an upper bound to the matched guesswork.
The proof follows from performing the change of variable $H(\Pi_{\mathcal{T}_\nu} (\zeta)) = T(\nu,\alpha)$, and identifying the resulting optimization as being equivalent to \eqref{eq:first_form}.
\end{proof}

\section{Applications to one-to-one Coding} \label{sec:one_to_one}

In this section, we connect the established results to lossless source coding. We follow the notation from \cite{courtade2014cumulant}, and start by a discussion on lossless coding without mismatch.
A lossless source code is an injective function $f: \mathcal{X} \to \{ 0,1\}^*$, and we refer to $f(x)$, for some $x \in \mathcal{X}$ as a codeword. For a codeword $c \in \{ 0,1\}^*$, the length of the codeword is denoted by $l(c)$. A lossless source code $f^*$ is said to be optimal if it satisfies $\mathbb{E}[l(f^*(X))] \geq \mathbb{E}[l(f(X))]$ for all valid source codes $f$.

The relationship between the optimal source code $f^*$ and the log-guesswork $g_\mu$, was discussed in \cite{ArikanMerhav}
\cite{HanawalSundaresan}, and later in \cite{ChristiansenDuffy}. Essentially, this correspondence is due to the relation $\mu(x) \geq \mu(y) \iff l(f^*(x)) \leq l(f^*(y))$, which imposes that there is an optimal encoding with $l(f^*(x)) \geq \lfloor \log_2 G_\mu(x) \rfloor$ for all $x \in \mathcal{X}$. For iid sources, the asymptotic behavior of lossless codes are investigated through two quantities of interest, namely the asymptotic average length, and the reliability function
\begin{align}
    &L(\mu) \triangleq \lim_{n \to \infty} \frac{1}{n} \mathbb{E}[l(f^*(X^n))], \\
    &E(R,\mu) \triangleq - \liminf_{n \to \infty} \frac{1}{n} \log \mathbb{P}_{\mu^n}\left(l(f^*(X^n)) > n R \right) ,
\end{align}
where $H(\mu) <R <\log |\mathcal{X}|$. Naturally, the average length $L(\mu) = H(\mu)$, that is, the best average length for a lossless code is asymptotically converging to the entropy of the source, see \cite{szpankowski-2008}. By using the correspondence between $l(f^*(x^n))$ and $g_\mu(x^n)$, one can directly apply the results in Theorem~\ref{thm:main} to obtain closed forms on the reliability function $E(R,\mu)$ (we refer to \cite{courtade2014cumulant} for more details). In the rest of this section, we discuss analogous quantities for the case of mismatched lossless coding without prefix-free constraint.

Now, assume that an optimal lossless source code is constructed according to a mismatched source statistic $\nu$. We let $f^*_\nu$ be the resulting optimal code for the source statistic $\nu$, and define the asymptotic average length and reliability function similarly as in the matched case, i.e.,
\begin{align}
    & L(\nu \| \mu) = \lim_{n \to \infty} \frac{1}{n} \mathbb{E}[l(f^*_{\nu}(X^n))] \\
    & E(R,\nu \| \mu) \triangleq - \liminf_{n \to \infty} \frac{1}{n} \log \mathbb{P}_{\mu^n}\left(l(f_{\nu}^*(X^n)) > n R \right).
\end{align}
The following is the main result of this section, and is a direct consequence of the LDP result on the mismatched guesswork.
\begin{theorem}
Let $X^n \sim \mu^n$, and assume $\Pi_{\mathcal{T}}(\mu) \in \mathcal{T}_\nu^+$, then:
\begin{align}
    &L(\nu \| \mu) = H(\Pi_{\mathcal{T}_\nu}(\mu)), \\
    &E(R,\nu \| \mu) = J(R),
\end{align}
for $H(\Pi_{\mathcal{T}_\nu}(\mu)) < R < \log |\mathcal{X}|$.
\label{thm:mismatched-one-to-one-coding}
\end{theorem}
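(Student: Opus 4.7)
The plan is to leverage the well-known correspondence between one-to-one codeword length and log-guesswork. An optimal one-to-one code $f^*_{\nu^n}$ assigns the shortest available binary string to the symbol that is most likely under $\nu^n$, the next shortest to the second most likely, and so on, which yields $l(f^*_{\nu^n}(x^n)) = \lfloor \log_2 G_{\nu^n}(x^n) \rfloor$ up to an $O(1)$ additive constant that is irrelevant after normalization by $n$. Thus both claims reduce to statements about $g_{\nu^n}(X^n) = \log G_{\nu^n}(X^n)$, whose large deviations are controlled by Theorem~\ref{thm:mism_ldp}.

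For the first statement, I would first identify $L(\nu \| \mu)$ as $\lim_n \frac{1}{n}\mathbb{E}_{\mu^n}[g_{\nu^n}(X^n)]$. The LDP gives $\tfrac{1}{n} g_{\nu^n}(X^n) \to t^\star$ in probability, where $t^\star$ is the unique zero of the rate function. Setting $J(t^\star)=D(\gamma_{\nu,\mu}(t^\star)\|\mu)=0$ forces $\gamma_{\nu,\mu}(t^\star)=\mu$; since $\mu \in \mathcal{T}_{\nu,\mu}$ (take $\alpha=0$ in \eqref{eq:def_tilted_nu_mu}) and $\gamma_{\nu,\mu}(t^\star)\in \mathcal{L}(\nu,\alpha(t^\star))$, this is equivalent to $H(\mu\|\nu)=H(T(\nu,\alpha(t^\star))\|\nu)$, which by Definition~\ref{def:projection} selects $T(\nu,\alpha(t^\star))=\Pi_{\mathcal{T}_\nu}(\mu)$. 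Therefore $t^\star = H(T(\nu,\alpha(t^\star))) = H(\Pi_{\mathcal{T}_\nu}(\mu))$. Because $\tfrac{1}{n}g_{\nu^n}(X^n)$ takes values in the compact interval $[0,\log|\mathcal{X}|]$, convergence in probability upgrades to convergence in mean by bounded convergence, giving $L(\nu\|\mu)=H(\Pi_{\mathcal{T}_\nu}(\mu))$.

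For the reliability function, the identification of codeword length with log-guesswork gives, to leading exponential order,
\begin{align}
E(R,\nu\|\mu) \;=\; -\liminf_{n\to\infty}\frac{1}{n}\log \mathbb{P}_{\mu^n}\!\Bigl(\tfrac{1}{n} g_{\nu^n}(X^n) > R\Bigr) \;=\; \inf_{t\ge R} J(t),
\end{align}
where the last equality is the LDP upper/lower bounds applied to the closed half-line $[R,\log|\mathcal{X}|]$. What remains is to show that the infimum is attained at $t=R$ for $R\in(H(\Pi_{\mathcal{T}_\nu}(\mu)),\log|\mathcal{X}|)$, i.e., that $J$ is strictly increasing on this interval. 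This follows from the implicit description of $J$: by Lemma~\ref{lem:existence_uniqueness} and the strict monotonicity of $H(T(\nu,\cdot))$, the map $t\mapsto \alpha(t)$ is a strict decreasing bijection onto $[0,\infty)$, so as $t$ grows past $t^\star$ the intersection $\gamma_{\nu,\mu}(t)=\mathcal{T}_{\nu,\mu}\cap \mathcal{L}(\nu,\alpha(t))$ moves monotonically along $\mathcal{T}_{\nu,\mu}$ away from $\mu$, and strict convexity of $D(\cdot\|\mu)$ on this smooth curve (the same convexity argument that produces the minimizer in the proof of Theorem~\ref{thm:mism_ldp}) yields the strict monotonicity $J(t)\uparrow$. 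Consequently $E(R,\nu\|\mu)=J(R)$ on the stated range.

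The main obstacle I anticipate is not any single deep step but rather the careful bookkeeping: the $\lfloor\log_2\cdot\rfloor$ rounding and the nats/bits normalization introduce $O(1)$ constants and a multiplicative $\log 2$ that disappear after dividing by $n$ and at the exponential scale, so they affect neither $L(\nu\|\mu)$ nor $E(R,\nu\|\mu)$; and the step from convergence in probability to convergence of $\mathbb{E}[g_{\nu^n}(X^n)/n]$ relies crucially on the almost sure uniform boundedness of the normalized guesswork. The substantive technical content is the monotonicity of $J$, which I expect to prove either via the above strict-convexity argument or by differentiating the KKT solution characterizing $\gamma_{\nu,\mu}(t)$ in the proof of Theorem~\ref{thm:mism_ldp}.
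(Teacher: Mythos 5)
Your proposal is correct and shares the paper's starting point (identify $l(f^*_{\nu^n}(x^n))$ with $\lfloor\log_2 G_{\nu^n}(x^n)\rfloor$ up to constants that vanish after normalization, then invoke Theorem~\ref{thm:mism_ldp}), but the average-length part takes a genuinely different route. The paper obtains $L(\nu\|\mu)$ by passing through the moment exponents: it writes $L(\nu\|\mu)=\lim_{\rho\downarrow 0}E_\rho(\nu\|\mu)$ (justified by L'H\^opital), and then argues from the variational form of Corollary~\ref{cor:mism_moments} that as $\rho\downarrow 0$ the penalty term $\tfrac{1}{\rho}D(\gamma\|\mu)$ forces $\gamma=\mu$, yielding $H(\Pi_{\mathcal{T}_\nu}(\mu))$. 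You instead work directly at the level of the LDP: you locate the unique zero $t^\star$ of the rate function, show $J(t^\star)=0$ forces $\gamma_{\nu,\mu}(t^\star)=\mu$ and hence $t^\star=H(\Pi_{\mathcal{T}_\nu}(\mu))$, deduce convergence in probability of $\tfrac{1}{n}g_{\nu^n}(X^n)$ to $t^\star$, and upgrade to convergence in mean via boundedness on $[0,\log|\mathcal{X}|]$. Your route avoids the interchange of the $n\to\infty$ and $\rho\downarrow 0$ limits that the paper leaves implicit, at the price of needing $J$ to be bounded away from $0$ off neighborhoods of $t^\star$ (which follows from lower semicontinuity and compactness). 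For the reliability function the two arguments coincide in spirit, but you make explicit a step the paper declares immediate: that $E(R,\nu\|\mu)=\inf_{t\ge R}J(t)$ and that this infimum is attained at $R$ because $J$ is increasing above $t^\star$. One caveat: your justification of that monotonicity via ``strict convexity of $D(\cdot\|\mu)$ restricted to the curve $\mathcal{T}_{\nu,\mu}$'' is loose, since convexity of a function on the simplex does not transfer to its restriction to a nonlinear curve; a cleaner argument is that $J(t)$ is the optimal value of a minimization of $D(\cdot\|\mu)$ over constraint sets of the form $\{\gamma: H(\gamma\|\nu)\ \text{at least (resp.\ at most)}\ H(T(\nu,\alpha(t))\|\nu)\}$, which are nested and shrink monotonically as $t$ moves away from $t^\star$, so $J$ is nondecreasing there, with strictness from uniqueness of the minimizer. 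This is a repairable gloss rather than a gap, and the paper itself does not address the point at all.
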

\begin{proof}
    The proof of the statement on the reliability function follows immediately by noting that there is an optimal encoding such that $g_\nu(x^n)\leq l(f^*_\nu(x^n)) < g_\nu(x^n) + 1$, and by applying Theorem~\ref{thm:mism_ldp}. The result on $L(\nu \| \mu)$ follows from:
    \begin{align}
        L(\nu \| \mu) &= \lim_{n \to \infty} \frac{1}{n} \mathbb{E}_{\mu^n}\left[ g_{\nu}(X^n)\right] \\
        & = \lim_{\rho \downarrow 0} E_{\rho}(\nu \| \mu),
    \end{align}
    where the first equality is again a consequence of the correspondence between optimal code and guesswork, while the second equality is an application of L'H\^opital's rule. Recall that, by Corollary~\ref{cor:mism_moments}, $E_\rho(\nu \| \mu ) = \max_{\gamma \in \mathcal{T}_\nu^+} H(\Pi_{\mathcal{T}_\nu}(\gamma))) - \frac{1}{\rho} D(\gamma \| \mu)$. It follows that when $\rho \downarrow 0$, it must be that $\gamma = \mu$, which results in $L(\nu \| \mu) = H(\Pi_{\mathcal{T}_\nu}(\mu))$.  
\end{proof}
In prefix free coding, the average length of the coded iid sequence is governed by the cross entropy $H(\mu \| \nu)$, where $\mu$ is the true distribution, and $\nu$ is the mismatched distribution used to generate the code. In particular, since $D(\mu \| \nu) \geq 0$, with equality only if $\mu = \nu$, there is always a loss in performance in using a mismatched distribution. The result above guarantees that the performance of a lossless one-to-one code always exceeds that of a prefix-free code in terms of asymptotic average length, in the presence of mismatch. Indeed, we have by Lemma~\ref{lem:entr_ineq},
\begin{align}
    H(\Pi_{\mathcal{T}_\nu}(\mu)) &= H(\mu\|\Pi_{\mathcal{T}_\nu}(\mu)) \\
    & = H(\mu) + D(\mu \| \Pi_{\mathcal{T}_\nu}(\mu)) \\
    & \leq H(\mu) + D(\mu \|\nu),
\end{align}
where the last step follows from Lemma~\ref{thm:pythagore}.
Therefore, the penalty induced by mismatch from one-to-one coding is always upper bounded by the penalty for prefix-free codes as the asymptotic average codeword length in both cases is characterized by $H(\mu)$~\cite{szpankowski-2008}. The relative entropy $D(\mu \| \Pi_{\mathcal{T}_\nu}(\mu))$ can also be 0, if $\mu \in \mathcal{T}_\nu^+$, i.e., if $\mu$ and $\nu$ are on the same tilted distribution. This implies that the cost of mismatched source coding vanishes if and only if $\mu \in \mathcal{T}_\nu^+$ (Lemma~\ref{lem:mism_same_tilted_family}), and Theorem~\ref{thm:mismatched-one-to-one-coding} generalizes such characterization to arbitrary mismatched distributions.

\section{Conclusion}\label{sec:conclusion}

In this paper, we revisited mismatch guesswork using geometric insights. In particular, we generalized the tilted families of \cite{Beirami-IT}, and showed that the LDP rate function is implicitly expressed in terms of the relative entropy between distributions on this tilted family, and the true distribution $\mu$. We applied these results to the case of one-to-one lossless coding, and showed that, perhaps surprisingly, one-to-one coding is more robust to mismatch than prefix-free coding. Interestingly, similar tilted distributions have appeared in the context of error exponents, see e.g. \cite{borade2006projection}. A more in depth study of the relationship between mismatched guesswork and error exponents for random coding is of future interest.

\section*{Acknowledgment}
The authors would like to thank Robert Calderbank (Duke University), Ken Duffy (National University of Ireland Maynooth), and Wasim Huleihel (Tel Aviv University) for insightful discussions on the topic of mismatched one-to-one source coding and guesswork.

\bibliographystyle{IEEEtran}
\bibliography{strings}

\end{document}